\documentclass{article}

\usepackage{microtype}
\usepackage{graphicx}
\usepackage{subcaption}
\usepackage{booktabs} 

\usepackage{hyperref}

\usepackage[preprint]{icml2026}

\usepackage{amsmath}
\usepackage{amssymb}
\usepackage{mathtools}
\usepackage{amsthm}

\usepackage[capitalize,noabbrev]{cleveref}

\usepackage[framemethod=TikZ]{mdframed}
\usepackage{thm-restate}
\usepackage{makecell}
\usepackage{enumitem}
\usepackage{newfloat}

\DeclareFloatingEnvironment[
    fileext=los,
    listname=List of protocols,
    name=Protocol,
    placement=!h,
    within=section,
]{proto}

\theoremstyle{plain}
\newtheorem{theorem}{Theorem}[section]

\newtheorem{lemma}[theorem]{Lemma}

\theoremstyle{definition}
\newtheorem{definition}[theorem]{Definition}

\theoremstyle{remark}
\newtheorem{remark}[theorem]{Remark}
\newtheorem{claim}[theorem]{Claim}

\usepackage[textsize=tiny]{todonotes}

\usepackage{soul}

\icmltitlerunning{Accurate, private, federated U-statistics with higher degree}

\begin{document}

\newcommand{\janfoot}[1]{}

\twocolumn[
  \icmltitle{Accurate, private, secure, federated U-statistics with higher degree}




  \begin{icmlauthorlist}
    \icmlauthor{Quentin Sinh}{inria}
    \icmlauthor{Jan Ramon}{inria}
  \end{icmlauthorlist}

  \icmlaffiliation{inria}{MAGNET, Univ. Lille, INRIA, CNRS, UMR 9189 - CRIStAL, F-59000 Lille, France}

  \icmlcorrespondingauthor{Quentin Sinh}{quentin.sinh@inria.fr}
  \icmlcorrespondingauthor{Jan Ramon}{jan.ramon@inria.fr}

  \icmlkeywords{Federated Learning, Differential privacy, Multi-party computing, U-statistic}

  \vskip 0.3in
]



\printAffiliationsAndNotice{}  

\begin{abstract}
We study the problem of computing a U-statistic with a kernel function $f$ of degree $k \geq 2$, i.e., 
the average of some function $f$ over all $k$-tuples of instances, 
in a federated learning setting.  
U-statistics of degree $2$ include several useful statistics such as Kendall's $\tau$ coefficient, 
the Area under the Receiver-Operator Curve and the Gini mean difference. 
Existing methods provide solutions only under the lower-utility local differential privacy model and/or scale poorly in the size of the domain discretization.

In this work, we propose a protocol that securely computes U-statistics of degree $k \geq 2$ under central differential privacy by leveraging Multi Party Computation (MPC).
Our method substantially improves accuracy when compared to prior solutions.  
We provide a detailed theoretical analysis of its accuracy, communication and computational properties.
We evaluate its performance empirically, obtaining favorable results, e.g., for Kendall's $\tau$ coefficient, our approach reduces the Mean Squared Error by up to four orders of magnitude over existing baselines.
\end{abstract}

\newcommand{\dcl}{[\![}
\newcommand{\dcr}{]\!]}
\newcommand{\esp}{\mathbb{E}}
\newcommand{\var}{\text{Var}}
\newcommand{\cov}{\text{Cov}}
\newcommand{\dataset}{D}
\newcommand{\instSpace}{\mathbb{X}}
\newcommand{\popDist}{P_{\instSpace}}
\newcommand{\proba}{\mathbb{P}}
\newcommand{\comb}[2]{\left(\begin{array}{c}{#1}\\{#2}\end{array}\right)}
\newcommand{\nkcombs}{C}
 \newcommand{\sampleSize}{{\hat{m}}}
\newcommand{\xor}{\oplus}
\newcommand{\tm}{\tilde{m}}
\newcommand{\secretS}[1]{[\![{#1}]\!]}
\newcommand{\shareSet}{\mathcal{S}}
\newcommand{\binarSet}{\{0, 1 \}}
\newcommand{\indic}{\mathbb{I}}
\newcommand{\randomSelect}{\overset{{\scriptscriptstyle\$}}{\leftarrow}}
\newcommand{\partyi}{P_{i}}
\newcommand{\partyidx}[1]{{P_{#1}}}
\newcommand{\dishonestfrac}{{(1-f_H)}}
\newcommand{\honestfrac}{{f_H}}
\newcommand{\fhpositive}{f_H>0}
\newcommand{\disMajModel}{\mathcal{M}_{\hbox{Dis}}}
\newcommand{\honFracModel}{\mathcal{M}_{\hbox{HF}}}
\DeclarePairedDelimiter{\nint}\lceil\rceil
\newcommand{\CCf}{\mathsf{C}^C_f}
\newcommand{\CCNoise}{\mathsf{C}^C_\eta}
\newcommand{\linccnoise}{\mathsf{c}^C_\eta}
\newcommand{\CCmod}{\mathsf{C}^C_{\%}}
\newcommand{\CompCf}{\mathsf{C}^T_f}
\newcommand{\CRf}{\mathsf{C}^R_f}
\newcommand{\CRmod}{\mathsf{C}^R_{\%}}
\newcommand{\CRNoise}{\mathsf{C}^R_\eta}

\newenvironment{hproof}{%
  \renewcommand{\proofname}{Proof sketch}\proof}{\endproof}

\mdfsetup{%
   middlelinewidth=1pt,
   roundcorner=2.5pt}

\newenvironment{func}[1][]{
\ifstrempty{#1}
	{\mdfsetup{%
    frametitle={%
      \tikz[baseline=(current bounding box.east),outer sep=0pt]
      \node[anchor=east,rectangle,]
    {\strut };}}}
	{\mdfsetup{%
    frametitle={%
      \tikz[baseline=(current bounding box.east),outer sep=0pt]
      \node[anchor=east,rectangle,fill=blue!20, rounded corners=2.5pt, line width=1pt, draw=blue!10]
    {\strut {#1}};}}}

  \mdfsetup{innertopmargin=10pt,linecolor=black!40,%
            linewidth=1pt,topline=true,%
            frametitleaboveskip=\dimexpr-\ht\strutbox\relax,}

	\begin{mdframed}[]\relax}
	{\end{mdframed}
}

        
\newenvironment{prot}[1][]{
\ifstrempty{#1}
	{\mdfsetup{%
    frametitle={%
      \tikz[baseline=(current bounding box.east),outer sep=0pt]
      \node[anchor=east,rectangle,]
    {\strut };}}}
	{\mdfsetup{%
    frametitle={%
      \tikz[baseline=(current bounding box.east),outer sep=0pt]
      \node[anchor=east,rectangle,fill=white, rounded corners=2.5pt, line width=1pt, draw=black]
    {\strut {#1}};}}}

  \mdfsetup{innertopmargin=10pt,linecolor=black,%
            linewidth=1pt,topline=true,%
            frametitleaboveskip=\dimexpr-\ht\strutbox\relax,}

	\begin{mdframed}[]\relax}
	{\end{mdframed}
}

\section{Introduction}

In statistical theory, U-statistics are a class of statistics widely used for estimating population parameters. Introduced by W. Hoeffding \cite{hoeffding1992class}, a U-statistic with kernel of degree $k$ is an estimator returning the average of a kernel function over all possible $k$-tuples of instances.
Many statistics can be expressed as a U-statistic. E.g., the sample mean is a U-statistic with a kernel function of degree 1. Kendall's $\tau$, the Gini mean difference and the Area under the ROC curve are U-statistics with a kernel of degree $2$.  Several problems in machine learning use U-statistics of degree $k \geq2$, e.g., supervised metric learning \cite{bellet2015metric} where the goal is to construct a task-specific distance metric from data, pairwise clustering of data \cite{clemencccon2011u} where one wants to partition the data into homogenous groups, or multi-partite ranking \cite{clemenccon2013ranking} which aims at ordering data points given some score function. 
	
With the exponential growth in data generation and its usage, data privacy has become a major issue.  To protect data, it is important to avoid disclosing intermediate results and outputs from which input data properties can be inferred.  One can avoid leaks from intermediate results either by encrypting messages or by adding noise to the information exchanged, while to protect outputs the only option is to make them noisy.  Introduced by \cite{dwork2006our}, Differential Privacy (DP) is a framework that is well-suited for making information noisy and trading off its privacy against its utility.

In cases where data is stored at the premises of multiple data owners, Federated Learning (FL) avoids the need to transfer the data to a central place.  Existing FL algorithms commonly compute U-statistics with kernel of degree $1$.  To avoid leaking information, several strategies exist for secure aggregation, e.g., \cite{sabater2022accurate}.
For instance, in Federated Stochastic Gradient Descent (FedSGD) \cite{shokri2015privacy}, in every step each party computes a local gradient from their local data and the central server computes an average gradient by aggregating over these local gradients.
While many existing approaches privately compute $U$-statistics with a kernel of degree $1$ in a satisfactory way, computing privately U-statistics with a kernel of degree $k \geq 2$ received less attention in the literature. 

In this paper, we address the problem of computing privately a U-statistic with kernel $f$ of degree $k \geq 2$.
We consider parties $\partyi$, $i\in[n]=\{1\ldots n\}$, which each have a single instance $x_i$ and collaboratively compute the U-statistic $
U_{f}= \sum\{f(x_{v_1} \ldots x_{v_k})\mid v\subseteq[n]\wedge |v|=k\}/{\scriptsize{\left(\!\!\begin{array}{c}n\\k\end{array}\!\!\right)}}
$
without revealing their input $x_i$.
While we assume every party only has one data instance, our approach can be easily generalized as one can simply merge parties together and eliminate the communication cost between them.
Before the resulting statistic is released, it is protected by differential privacy. 
In addition to ensuring privacy, our goal is to design a procedure that balances two competing objectives: 
the Mean Squared Error (MSE) between the estimated and true values and the total communication cost required for the computation. 
These objectives are generally in tension, e.g., reducing communication by sending lower-precision numbers typically increases estimation error. 
As a secondary objective we aim to minimize the computational cost.

\subsection{Related works}
\label{subsection:related_works}
In \cite{bell2020private}, a method is proposed to compute privately an approximation of a U-statistic $\bar{U}_{f}$ of a kernel function $f$ of degree $2$ under $\epsilon$-local differential privacy (LDP) where each party $\partyi$ has an input $x_i$ for $i \in [n]$.   In LDP, each data owner sends a noisy version of their data $\tilde{x}_i$ to an untrusted aggregator.  In this way, the aggregator cannot learn the private data $x_i$.  However, in LDP, the quantity of noise added is essentially larger compared to the central DP model, which can affect accuracy of the computed statistic. For $\epsilon$-LDP, under the assumption that $f$ is a Lipschitz function, Bell et al. \cite{bell2020private} showed that the population MSE of their approximation is bounded by $O(1/\sqrt{n}\epsilon)$. 
The technique discretizes the input space $\mathbb{X}$ into $t$ bins, and then approximates the kernel $f$ with a matrix $A \in \mathbb{R}^{t \times t}$.
The protocol is non-interactive, i.e., parties send only one message to the aggregator.
	
The authors also propose a second, interactive protocol for privately computing a U-statistic
in the 2-party setting (parties $\partyidx{1}$ and $\partyidx{2}$ each have a part of the dataset).
The protocol samples a subset $\tilde{C}$ containing pairs $(x_i, x_j) \in S_1 \times S_2$ to compute the U-statistic $\bar{U}_{f, \tilde{C}}$ and leverages garbled circuits to compute privately $f(x_i, x_j)$.
The resulting U-statistic $\bar{U}_{f, \tilde{C}}$ is said to be incomplete as
$\tilde{C} \subset C^n_2$
where $C^n_2$ represents the set of all the possible pairs $(x_i, x_j)$. 
The MSE is of order $O(\frac{1}{Pn} + \frac{P}{n \epsilon^2})$ where $P$ is linear in the size of the 
sample $\tilde{C}$. The scheme is $\epsilon$-Computationally DP. 

The work of Ghazi et al. \cite{ghazi2024computing} too provides a non-interactive protocol for computing degree $2$ U-statistics. In this scheme too, each party $\partyi$ has one input instance $x_i \in \mathbb{X}$. Similarly to the non-interactive protocol in \cite{bell2020private}, first, the input space $\mathbb{X}$ is discretized in $t$ bins and the kernel function $f$ is approximated with a matrix $A \in \mathbb{R}^{t \times t}$.  In particular, for $i, j \in [t]$, $A_{i, j} = f(r_i, r_j)$ where $r_i, r_j$ are the representative values for bins $i$ and $j$.  By leveraging the Johnson-Lindenstrauss (JL) theorem, the matrix $A$ can be approximated by $A \approx L^T R$ where $L, R \in \mathbb{R}^{d \times t}$ for $d=O(\log n)$, enabling a reduction in communication cost.
The aggregator publicly releases the matrices $L, R$, after which each party $\partyi$ sends $o_i^R = R 1_{x_i} + z_i^R$ and $o_i^L = L 1_{x_i} + z_i^L$ to the aggregator where $z_i^R$ and $z_i^L$ are DP noise terms.  Finally, the aggregator then computes $\tilde{U}_{f} = \left< \frac{1}{n} (o_1^L + \cdots + o_n^L), \frac{1}{n} (o_1^R + \cdots + o_n^R) \right>$.
This work provides lower and upper for $\text{MSE}(\tilde{U}_{f})$ 
such that $ O \left(  \frac{\rho(A, n)^2 }{\epsilon^2 n} \right) \leq \text{MSE}(\tilde{U}_{f}) \leq O \left(  \frac{ \rho(A, n)^2  (\log t)}{\epsilon^2 n} \right)$
where $\epsilon$ is the DP parameter and $\rho(A, n)$ is the minimum of the approximate-factorization norm plus some term proportional to $\sqrt{n}$. 
Their work also notes that the scheme can be extended to central $(\epsilon, \delta)$-DP through the shuffled model.
In this model, each party transmits its noisy data $x_i'$ to a data collector, 
which relies on a set of $r$ trusted shufflers $ \{ \mathcal{SH}_j \}_{j \in [r]}$ that anonymize the inputs before sending them to the aggregator. 
It achieves a MSE $ O \left(  \frac{ \rho(A, n)^2  (\log t)}{\epsilon^2 n^2} \right)$.

The authors also propose a three-round algorithm to compute $\tilde{U}_{f}$ under $\epsilon$-LDP 
to have a more precise MSE :  $ O \left( \frac{\| A \|_{\infty}^2}{\epsilon^2 n } \right)$. 
It uses a similar technique from the non-interactive algorithm but utilizes clipping and $\epsilon$-local DP mechanisms to estimate some terms in order to calibrate the noise used.

\subsection{Contributions}

In general, existing work either suffers from more noisy output due to the use of LDP or JL approximations, or is limited to a 2-party setting or a security model where trusted parties or shufflers are available.

To address these gaps, in this work we address the problem of computing a U-statistic with a kernel function of degree 2 or more under central differential privacy in the federated setting among an arbitrary number of parties. 
\begin{itemize}
\item We provide a generic protocol for this task that scales well in the kernel degree $k$, the number of parties $n$ and the size of the discretization.  We leverage multi party computation (MPC), keeping computation and communication cost low by making the common assumption that at least a fraction of the parties is honest\janfoot{check clear in the discussion later on}.  
\janfoot{Include derivation of bounds?}
\item We present a detailed comparison of our approach against state-of-the-art (SOTA) solutions.
  Our experiments demonstrate that our protocol achieves improved accuracy and reduced total communication and computation costs compared to Ghazi et al. \cite{ghazi2024computing}.
  Although the solution by Bell et al. \cite{bell2020private} has similar communication and per-party computation, 
  it suffers from higher MSE and increased server-side computation.
\item \janfoot{check!} We further empirically validate our approach on U-statistics with kernel functions of degree 2, e.g., the Gini mean difference, Kendall's $\tau$ coefficient, etc.
\end{itemize}		
	
The remainder of the article is structured as follows. 
Sec \ref{section:2} provides some notations and background. 
Next, we describe the solution we propose in Section \ref{section:protocol}.
In Sec \ref{section:properties} we analyze the properties of this protocol and
in Sec \ref{section:comparison} we compare our protocol with the state of the art.
In Sec \ref{section:4}, we present an experimental evaluation. 
We conclude and offer direction for future work in Sec \ref{section:6}.

\section{Background}
\label{section:2}
\label{sec:background}

\paragraph{Notations}
We use $[s]$ to denote the set of $s$ smallest positive integers $\{1 \ldots s\}$, $\mathbb{Z}$ to denote the set of all integers and $\mathbb{Q}$ to denote the set of rational numbers. 
We define $|S|$ to be the cardinality of the set $S$.
We denote the indicator function by $\indic[\cdot]$, i.e., $\indic[true]=1$ and $\indic[false]=0$.
We use $[a, b]$ to represent the set of all real numbers $x$ such that $a \leq x \leq b$ and $(a, b)$ for the set of all real numbers $x$ such that $a < x < b$.
We write $x \oplus y$ to represent the bitwise XOR of integers $x$ and $y$.
We define $\nkcombs^s_t={\scriptsize{\comb{\!\![s]\!\!}{t}}}$ to be the set of all unordered tuples (sets) of $t$ distinct elements of $[s]$.  
From now on, we will simply refer to these unordered tuples as tuples.
There holds $|\nkcombs^s_t|={\scriptsize{\comb{\!s\!}{\!t\!}}}$. 
For a sequence $x\in X^s$ of elements of some set $X$ and for a set of indices $v\in [s]^t$, 
we will denote by $x_{v} = (x_{v_1}, \dots, x_{v_t})$ the tuple of $t$ elements obtained by indexing $x$ with the indices in $v$.  
For a function $f$ with $t$ arguments, we will also abuse notation to write $f(x_v) = f(x_{v_1}, \dots, x_{v_t})$.



\paragraph{Probabilities}
We use $\proba(A)$ to represent the probability of event $A$. 
We write $x \sim D$ to express that $x$ is sampled from probability distribution $D$.
We consider an instance space $\instSpace$ and a population distribution $\popDist$ over $\instSpace$.  
We assume that every party $i\in[n]$ has a single instance $x_i$ drawn i.i.d. from $\popDist$. 
Let $\mathbb{Y}$ be an output space.
Hereafter, let $n$ denote both the number of parties and the number of data points.

\paragraph{U-statistics} U-statistics are an important concept:
\begin{definition}[U-statistic]
  \label{definition:u_statistic}
  Let  $f:\instSpace^k \to \mathbb{Y}$
be a symmetric function, i.e., for all $x \in \mathbb{X}$ 
and any permutation $\phi \in S_k$ where $S_k$ denotes the set of all permutations on $[k]$, there holds $f(x_{v_1}, \ldots, x_{v_k}) = f(x_{v_{\phi(1)}}, \ldots, x_{v_{\phi(k)}})$.
  We call $k$ the degree of $f$.
  For a set $C\subseteq C^n_k$, we define
  \begin{equation}
    U_{f, C} = \frac{1}{|C|}\sum_{v \in \nkcombs^n_k} f(x_v)
  \end{equation}	
  We call $U_{f, C^n_k}$ the U-statistic with kernel $f$.
  For subsets $C\subseteq C^n_k$, we call $U_{f,C}$ a partial U-statistic with kernel $f$.
\end{definition}

The symmetry of $f$ ensures that $f(x_v)$, with $v$ a set, is well-defined.
Since the cost of evaluating $f(x_v)$ for all $v \in C^n_k$ is exponential in $k$, 
we (similarly to \cite{bell2020private}'s interactive 2 party protocol) approximate the U-statistic $U_{f, C^n_k}$ by $U_{f,C}$ for a smaller set $C$.  Examples of U-statistics, e.g., Kendall $\tau$, are discussed in Appendix \ref{subsection:example_ustatistic}.

\paragraph{Data Representation}
We employ fixed-precision arithmetic, using $l$ bits to represent numbers in
$\mathbb{Q}_h = \{ \bar{x}^h \in \mathbb{Q} ~|~ \bar{x}^h = \mathsf{int}(x) \cdot h; x \in \mathbb{F}_{2^\ell} \}$ where $\mathsf{int}: \mathbb{F}_{2^l}\to \{-2^{l-c-1} \ldots 2^{l-c-1}-1\}$.  For details see Appendix \ref{app:data_representation}

\paragraph{Secret Sharing}
Our protocol relies on secret sharing.  In particular, for integers $t\le p\le n$ we consider $(t,p)$-threshold secret sharing. 
For a secret $x$, we denote by  $\secretS{x}$ a secret sharing of $x$ where $\secretS{x}_P\in C^n_p$ denotes the indices of parties involved in the sharing of $x$
and for every $i\in\secretS{x}_P$ the number $\secretS{x}_i$ is the share of party $\partyi$.  
If $t$ parties in $\secretS{x}_P$ collaborate, they can reconstruct $x=\mathsf{Rec}(\secretS{x}_{i_1}, \dots, \secretS{x}_{i_{t}})$, while a set of less than $t$ parties are unable to reconstruct the secret.  
In many applications of secret sharing $\secretS{x}_P$ is the same for all secret sharings, typically the set of all parties, however our protocol aims to be more efficient.  When we consider multiple secret sharings of the same secret $x$ among multiple sets of parties $e$, we use a superscript $\secretS{x}^{(e)}$ to distinguish them.
For more details, see Appendix \ref{app:secret_share}.

\paragraph{Preprocessing and evaluation metrics} When the same algorithm is ran repeatedly on different data, we call the \textit{offline phase} the work that is performed once and can be re-used in every run, while we call the \textit{online phase} the work that is repeated in every run.  When evaluting algorithms, we focus on the online phase.  With \textit{round complexity} we refer to the number of rounds required in an interactive protocol.  With \textit{communication complexity} we refer to the total volume of data exchanged.
For more details, see Appendix \ref{app:mpc_preproc}

\paragraph{Differential Privacy (DP)}
DP allows for releasing sensitive information by adding some noise.
Two datasets are adjacent if they differ in only one instance.  
For $\epsilon, \delta > 0$, a randomized algorithm $\mathcal{A}$ is said to provide (central) $(\epsilon, \delta)$-DP if for all adjacent datasets $D_1, D_2$ and for all possible subsets $O$ of the range of $\mathcal{A}$ there holds $\proba[\mathcal{A}(D_1) \in O] \leq e^\epsilon \cdot \proba[\mathcal{A}(D_2) \in O] + \delta.$
While central DP only puts a constraint on the output(s), local DP (LPD) requires the input to be already private so that no security is required during the computation.  This comes at the cost of lower utility.

For a function $f$, the $q$-sensitivity of $f$ is defined by $\Delta_q f= \max\{\|f(x)-f(y)\|_q \mid \|x-y\|\le 1\}$, 
where $q=2$ if omitted.  
If $f$ gets a dataset as input, $\|x-y\|\le 1$ means that $x$ and $y$ are adjacent datasets.  Given a value $f(x)$, it can be privatized by applying a DP mechanism, e.g., the Laplace mechanism which adds a value randomly drawn from $Lap(0,\Delta_1 f/\epsilon)$ to $f(x)$ ensures the sum is $\epsilon$-DP, while the Gaussian mechanism which adds a value randomly drawn from $Gauss(0,2\log(1.25/\delta)(\Delta f)^2/\epsilon^2)$ ensures the sum is $(\epsilon,\delta)$-DP.  For more details, see Appendix \ref{app:dp_def} or a broad and systematic introduction in \cite{dwork2014algorithmic}.

\section{Proposed protocol}
\label{section:protocol}

In this section, we present our novel algorithm.

\paragraph{Threat model}
We assume that there are secure communication channels between  parties.
We assume an adversary which is static, i.e., a fixed set of parties are corrupted before the start of the protocol, and semi-honest, i.e., the corrupted parties execute correctly the protocol 
but are willing to cooperate between them to disclose sensitive information.
We consider two threat models.
In model $\disMajModel$, we assume that the adversary can corrupt $n-1$ parties.  
In the model $\honFracModel$, we assume that the adversary can corrupt at most $\dishonestfrac(n-1)$ parties, where $\fhpositive$.

\paragraph{Partial U-statistic} 
The cost of computing $f(x_v)$ over all $v\in\nkcombs^n_k$ is exponential in $k$.
For large datasets, runtimes superlinear in the data are often considered untractable.
Therefore, we compute a partial U-statistic $U_{f,E}$ closely approximating $U_{f, C^n_k}$.
Here, $E\subseteq C^n_k$ induces a hypergraph $G=(V,E)$ with vertex set $V=[n]$ and edge set $E$.
The parties can first jointly generate a seed for a pseudorandom number generator (PRG) and then use it to all draw the same random $E$.  
For a set $S$, let $E_{S}$ be the set of edges in $E$ that contain $S$, i.e., $E_S = \{ e \in E: S \subseteq e \}$. 
We also define the maximal degree $\delta_G^{max}=\max_{i\in[n]} |E_{\{i\}}|$.

\paragraph{Main protocol $\prod_\text{U-MPC}$} Our protocol uses additive secret sharing, where all parties over whom a secret has been distributed need to collaborate to reconstruct a secret.  
After an offline phase where data structures such as common randomness are generated, in the online phase which is detailed in Protocol \ref{protocol} the parties first compute a sharing $\secretS{f(e)}$ for all $e\in E$ by secret-sharing their data $x_i$ (Phase 1) and secret-shared computations (Phase 2).  
Possibly in parallel, they also compute a sharing $\secretS{\eta}$ of appropriate DP noise (Phase 3).  
Finally, they jointly compute the DP sum $\secretS{\eta+\sum_{e\in E} f(e)}$ and reveal the result (Phase 4).
Note that our protocol $\prod_\text{U-MPC}$ also supports the case when $k = 1$. 
In this case, the evaluation of the function $f$ can be done locally.

\begin{proto}
  \centering
  \begin{prot}[Protocol $\prod_\text{U-MPC}$]
    
    \textbf{Input}: 
    \begin{itemize}[nosep]
    \item $G = (V, E)$ with $V = [n]$ and $E \subseteq C^n_k$. 
    \item $\epsilon>0$, $\delta>0$ : DP parameters
    \item $\Delta f$: the $q$-sensitivity of $f$
    \item $x=(x_i)_{i\in[n]}$ : the input data
    \end{itemize}

    \vspace{2mm}
    \textbf{Online phase}:
    \begin{enumerate}
    \item \textit{\underline{Sharing phase}}: 
      
      For every $e \in E$ and $j \in [k]$:      
      \begin{enumerate}[nosep]
      \item Party $\partyidx{e_j}$ creates secret sharing $\secretS{x_{e_j}}^{(e)} 
        $ of $x_{e_j}$ among parties $\secretS{x_{e_j}}^{(e)}_P = e$
      \item For $l \in [k]$, party $\partyidx{e_j}$ sends $\secretS{x_{e_j}}^{(e)}$ to $\partyidx{e_l}$.
      \item $\partyidx{e_j}$ collects $S_{e_j}^e=\{(e',\secretS{x_{e'}}_{e_j}^{(e)})\mid e'\in e\}$
      \end{enumerate}      
      
    \item \underline{\textit{Computing phase}}: 

      For $e \in E$, $j\in[k]$, party $P_{e_j}$ calls functionality $\mathcal{F}_f(e, e_j, \mathcal{S}^{e}_{e_j})$ getting a share $ \secretS{ f(x_e) }_{e_j}$.      
      
    \item \underline{\textit{Noise generation phase}} \label{noiseadd}: 
      
      For $i \in [n]$, party $P_i$ calls functionality $\mathcal{F}_\text{noise}(\epsilon, \delta,  \delta_G^{max} \Delta f)$ receiving a share  $\secretS{\eta}_i$.
      
    \item \underline{\textit{Aggregation phase}}: 
      \begin{enumerate}
      \item For $i \in [n]$, party $\partyi$ computes \linebreak
        $z_i = \secretS{\eta}_i + \sum_{ e \in E_{ \{ i \} }} \secretS {f(x_e)}_{i}$. 
      \item For $i \in [n]$, $P_i$ sends $z_i$ to the aggregator, who reveals 
        $\hat{U}_{f, E} = \frac{1}{|E|} \cdot \mathsf{Rec}( \{ z_i \}_{i \in [n] })$.
      \end{enumerate}
      
    \end{enumerate}    
  \end{prot}
  \caption{Protocol $\prod_\text{U-MPC}$ for computing U-statistic with kernel function $f$ of degree $k \geq 2$.}
  \label{protocol}
\end{proto}

\paragraph{Functionalities}
The main protocol calls two functionalities, which are performed using secret shared computations:
\begin{itemize}[nosep]
\item $\mathcal{F}_f$ evaluates $f(x_e)$
\item $\mathcal{F}_{\text{noise}}$ draws random noise $\eta$
\end{itemize}

Such a functionality can be securely realized by protocols like the GMW protocol \cite{goldreich2019play} for honest majorities or additive secret sharing protocols for more malicious settings.
For $\mathcal{F}_{\text{noise}}$, protocols such as \citep{eigner2014differentially, keller2024secure, sabater2023private} have been proposed.
Our protocol make black-box use of functionalities $\mathcal{F}_f$ and $\mathcal{F}_\text{noise}$.
In this way, our protocol can benefit from any advancements in protocols for the generation of shared noise or in the secure evaluation of a function $f$.  Some possible implementations for $\mathcal{F}_{\text{noise}}$ are presented in Appendix \ref{section:drawing_noise}.

\section{Properties} 
\label{section:properties}
We now outline the properties of our protocol: correctness, security, privacy, costs and utility.
To enable comparison with benchmark solutions, which consider only the case $k=2$ under $\epsilon$-differential privacy, we derive a proposition summarizing the results for this specific setting.

\paragraph{Correctness}
It is easy to see that our protocol correctly computes $U_{f,E}+\eta/|E|$.  For this, the most important step is to observe that due to the use of additive secret sharing the mixing of secret shares over different groups of parties in step 4(a) is sound.  Appendix \ref{app:property_details_correct} gives more details.

\paragraph{Privacy and security}
We prove in Appendix \ref{app:property_details_privsec} that our algorithm is secure and $(\epsilon,\delta)$-DP.  The main observation to show privacy is the sensitivity computation: $U_{f,E}$ is an average of $|E|$ terms of which only at most $\delta_G^{max}$ are affected by the change of a single instance.

\paragraph{Communication complexity}
In Appendix \ref{sec:comm_cost}, we show that the protocol requires 
$O(|E| (k^2 \ell + \CCf) + \CCNoise )$ bits of communication in $\max(\CRf , \CRNoise)+2$ rounds where $\CCf$ and $\CCNoise$ are the communication costs of $\mathcal{F}_f$ and $\mathcal{F}_\text{noise}$ respectively, and $\CRf$ and $\CRNoise$ are the number of rounds needed for $\mathcal{F}_f$ and $\mathcal{F}_\text{noise}$ respectively.

The cost which is potentially most expensive from an asymptotic point of view is the communication cost $\CCNoise$ of $\mathcal{F}_\text{noise}$, as secret shared computation has in general a total communication cost quadratic in the number of parties $n$.  However, one can mitigate this problem if one is willing to accept a slightly weaker threat model. 
 In particular, consider the model $\honFracModel$ where the adversary corrupts at most a fraction $\dishonestfrac$ of the parties.  If we would delegate the computation of $\eta$ to a subgroup $Z\subseteq [n]$ of parties, then the probability that all members of that subgroup $Z$ are dishonest is bounded by $\dishonestfrac^{|Z|}$.  Hence, for any negligible probability $\varepsilon$, a randomly selected group $Z$ of $\lceil\log(\varepsilon)/\log(\dishonestfrac)\rceil$ will have at least one honest party with probability $1-\varepsilon$. If we let this group $Z$ privately draw $\eta$ and then secret-share it with all parties, the communication cost becomes linear in $n$.

\paragraph{Utility}
An important question concerns the quality of the approximation made by an approach.
We are therefore interested in the mean squared error (MSE) between the population statistic $U_f$ and the output of the protocol $\hat{U}_{f,E}= U_{f,E}+\eta/|E|$.  In Appendix \ref{subsubsection:ours} we write the MSE $E_{tot}=U_f-\hat{U}_{f,E}$ as a sum $E_{tot}=E_{sample}+E_{inc}+E_{DP}$ where $E_{sample}$ reflects the error of sampling $n$ instances, $E_{inc}$ reflects the error of sampling $E$ and $E_{DP}$ reflects the error due to the DP noise.  We show that
\begin{eqnarray*}
  E_{sample} & \le & 4/n \\
  E_{inc}   & \le & \left(\comb{n}{k}-|E|\right)\Big/4|E|\left(\comb{n}{k}-1\right) \\
  E_{DP}    & \le & 2 k^2 (\Delta f)^2 / n^2 \epsilon^2 
  \end{eqnarray*}
These bounds assume that $E$ is sampled in a way which makes the graph $G$ as regular as possible, so all vertices have degree $\lceil|E|k/n\rceil$ or $\lfloor |E|k/n \rfloor$.  Appendix \ref{subsection:algo_sampling} presents a simple method to draw such $E$.

\newcommand{\GhaziShuffle}{\mathsf{GhaziSM}}
\newcommand{\Ghazi}{\mathsf{Ghazi}}
\newcommand{\Umpc}{\mathsf{Umpc}}
\newcommand{\UmpcTrusted}{\mathsf{UmpcLoc}}
\newcommand{\Bell}{\mathsf{Bell}}

\section{Comparison}
\label{section:comparison}        
In this section, we compare the several approaches under different security settings, 
using various metrics for the computation a U-statistic of degree $k = 2$ under $\epsilon$-DP: 
\begin{itemize}[nosep]
\item $\Ghazi$: non-interactive protocol in \cite{ghazi2024computing}
\item $\GhaziShuffle$: the non-interactive protocol from \cite{ghazi2024computing} in the shuffled model,
  with $r$ servers to shuffle/anonymize messages,
  following the implementation of \citep[Sect.~5]{balle2020private},
\item  $\Bell$: the generic LDP protocol in \cite{bell2020private}
\item our protocol $\prod_\text{U-MPC}$ under $\mathsf{BalancedSamp}$ (Algorithm \ref{algo:sampling}) for sampling the edges, 
either under $\disMajModel$ with $O(n^2)$ noise generation cost or under $\honFracModel$ with $O(n)$ noise generation cost. 
\end{itemize}
This comparison doesn't consider \cite{bell2020private}'s generatic protocol from 2PC as it is limited to 2 parties. 

We summarize the comparison results in Table \ref{tab1}.
Expressions in Table \ref{tab1} omit terms which are not asymptotically dominant.

The number of bits required to represent one element is denoted by $\ell$.  The notation $\linccnoise$ is used for a constant factor in the communication cost of securely drawing noise independent of $n$.
The term $\CCf$ represents the communication cost incurred during the online phase of the evaluation of $f$ in $\prod_\text{U-MPC}$.
while $\CompCf$ represents the computational cost per party incurred during the online phase of the evaluation of $f$ in Protocol $\prod_\text{U-MPC}$.

\begin{table*}[t]
  \centering
  \caption{Protocol comparison}
  \label{tab1}
  \begin{tabular}{cccccc}
    \Xhline{2\arrayrulewidth}
    \noalign{\vskip 2mm}
    \textbf{Protocol} & \textbf{MSE}$^{(1)}$ & \textbf{Total online comm. cost}$^{(2)}$ & \textbf{Party comp cost}$^{(3)}$ & \textbf{Server comp cost}$^{(4)}$ \\
    \noalign{\vskip 2mm}
    \Xhline{2\arrayrulewidth}
    \rule{0pt}{4ex}$\Bell$ \cite{bell2020private} & $\frac{1}{t^2} + \frac{t^2}{n \epsilon^2}$ & $n \ell t $ & $t$ &  $n^2 t^2$ \\
    \noalign{\vskip 2mm}
    \hline
    
    \rule{0pt}{4ex}$\Ghazi$ \cite{ghazi2024computing} & $\frac{1}{t^2} + \frac{t^2 \log t}{\epsilon^2 n}$ & $ n^2 \epsilon^2 (\log t)\ell$ 
    & $\epsilon^2 n t \log t$ &  $\epsilon^2 n^2 \log t$\\
    
    \rule{0pt}{5ex}$\GhaziShuffle$ \cite{ghazi2024computing} & $\frac{1}{t^2} + \frac{ t^2 \log t}{\epsilon^2 n^2} $ & $ (\log n) n^2 \epsilon^2 (\log t)\ell $ 
    & $ \epsilon^2 n t \log t  + t \log n$ 
    & $ (\log n) \epsilon^2 n^2 \log t$ \\
    
    \noalign{\vskip 2mm}
    \hline
    
    \rule{0pt}{5ex} \makecell{$\Umpc$ \\(Prot. \ref{protocol})} & $\frac{1}{|E|} + \frac{1}{n^2 \epsilon^2}$ & $ \!\!\!\begin{array}{l} \disMajModel:|E|(\ell + \CCf) + n^2 \ell\linccnoise \\ \honFracModel:|E| (\ell + \CCf) + n \ell\linccnoise\end{array}\!\!\!\!\!\! $ & $ \frac{|E|}{n} \cdot \CompCf + n$ & $n$ \\
    
    \noalign{\vskip 2mm}
    \Xhline{2\arrayrulewidth}
  \end{tabular}
  
  \vspace{0.3cm}
  
  Asymptotic expressions are provided for 
  the MSE $E_{tot} - E_{sample}$ $^{(1)}$ ,
  the total communication cost measured in total bits exchanged in the online phase $^{(2)}$ and  
  the computation cost per party $^{(3)}$ and 
  for the server $^{(4)}$ .
\end{table*}

\paragraph{Mean squared error}
Derivations of the MSE in Table \ref{tab1} can be found in Appendix \ref{section:mse}.
We omit the sampling error $E_{sample}$ from our analysis as all compared methods start from a sample and suffer a similar error.
One can observe that the protocols of Bell and Ghazi don't offer good MSE for fine-grained discretizations (parameter $t$), while in our protocol the discretization is only relevant to represent values as integers and doesn't negatively affect the MSE.

\paragraph{Cost}
The table presents two communication costs of our protocol depending on the threat model considered.
For $\honFracModel$, our communication cost asymptotically matches the best baseline protocol $\Bell$ while it outperforms $\Bell$ on other metrics.
concerning the computation cost, one can observe that in contrast to baseline protocols, in our protocol the computation cost per party does not depend more than logarithmically on the discretization size $t$.  More details are provided in Appendix \ref{app:comp_comm_cost}.

\section{Experiments}
\label{section:4}

In this section, we present an empirical evaluation, limited to degree-$2$ U-statistics for the purpose of comparability with existing baselines.

\subsection{Experimental setup}
\label{paragraph:exp-setup}
\paragraph{Questions} We consider the following experimental questions:
\begin{enumerate}[nosep] 
\item[Q1] How does our protocol compare in terms of online communication cost, online computational cost and MSE against the baselines ?
\item[Q2] How much communication is required to complete the online phase of our protocol ?
\item[Q3] How does our sampling algorithm $\mathsf{BalancedSamp}$ (Algorithm \ref{algo:sampling}) reduce the MSE 
compared to other sampling methods ?
\end{enumerate}

 \paragraph{Protocols} When comparing with existing approaches, 
 we consider the 4 protocols $\Ghazi$, $\GhaziShuffle$, $\Bell$ and $\Umpc$ under the threat model $\honFracModel$
 defined in the beginning of Sec \ref{section:comparison}.
 
 \paragraph{Datasets}
 We perform our experiments using two datasets.
 First, in Sec \ref{subsubsection:gini}, 
 we consider a synthetic dataset where $\mathbb{X} = [0, 1]$ and $x_i \sim \text{Uni}([0, 1])$ for $i \in [n]$ where $\text{Uni}(S)$ is the uniform distribution over the set $S$.
 We use this synthetic data to measure the effect of variation in dataset size on the relevant metrics.
 The second dataset is the Bank Marketing dataset \cite{bank_marketing_222}.
 This dataset is related with direct marketing campaigns of a Portuguese banking institution. 
 The classification goal is to predict if the client will subscribe to a term deposit.
 It contains 16 features for each client. 
 The dataset contains 4521 instances. 
 We normalize all numerical dataset values such that $\mathbb{X} = [0, 1]$. 
 In the appendix, similar results for another dataset are presented.
 
 \paragraph{Parameters}
 As discussed in Sec \ref{sec:background} every party $P_i$ with $i\in[n]$ holds a value $x_i \in \mathbb{X}$ 
 which we represent using a fixed-precision representation.  
 We use integers of $\ell=40$ bits with $c=14$, i.e., with $h=2^{-14}$.
 This enables representing the U-statistic with values in the range $[-2^{25}, 2^{25}]$, which is sufficient for our purposes.
 Elements of the output space $\mathbb{Y}$ are encoded in the same manner.
 We focus on $\epsilon$-LDP/DP privacy, as it is also considered in the existing solutions.
 A detailed description of the MPC protocols used can be found in Appendix \ref{subsection:choice_protocols}.

 \paragraph{Hardware and implementation}
 We conduct experiments on a Linux server equipped with a 2.20GHz Intel Xeon processor, 64GB of RAM, and a Tesla P100 GPU (12GB capacity).
 The experiments compute the MSE, the communication cost and the theoretical computation cost according to our complexity analysis.  
 The code for the experiments is available at \url{https://github.com/anonguest1398/federated-U-statistics}.

 \subsection{Results}

 A complete explanation of the considered U-statistics is available in Appendix \ref{subsection:example_ustatistic}.
 For more experiments, see Appendix \ref{subsection:more_exp}.

 \subsubsection{Gini Mean Difference}
 \label{subsubsection:gini}

 We use the synthetic dataset to compute the U-statistic $U_f$ where $f(x_i, y_i) = | x_i - y_i |$.  
 We aim to compare the MSE and the total communication cost of the different protocols.

 \paragraph{Results} Figure \ref{fig:gini} shows on the left the online communication cost as a function of the number of parties (and hence dataset size) $n$, and on the right the MSE as a function of $n$,
 obtained when computing the Gini mean difference across the different protocols. 

 Figure \ref{fig:gini_computation} shows the per-party and server computation costs for $\epsilon=1$ and $t=256$.

\paragraph{Observations} 
 We observe that our protocol yields a better MSE compared to the alternatives.
 Out protocol also has a low communication cost, similarly to $\Bell$.

 We can see that $\mathsf{Bell}$ achieves the lowest per-party computation cost,
 at the expense of higher server-side computation.  Even so, in this application, the communication cost is likely to be a more important consideration as nowadays computational power is more readily available than scalable communication.
 
 \begin{figure}[ht]
   \begin{center}
     \centerline{\includegraphics[width=\columnwidth]{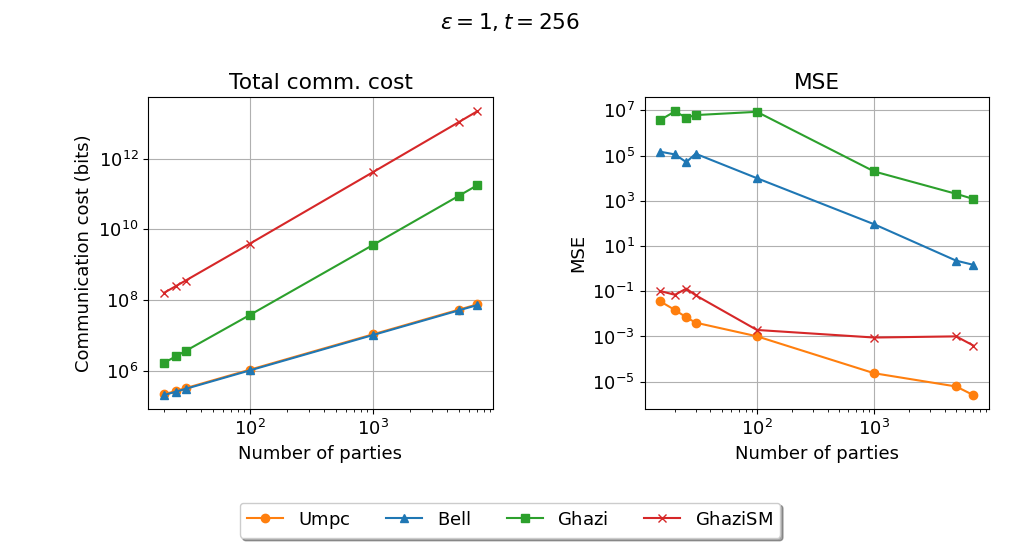}}
     \caption{Online total communication cost (left) and MSE (right) as a function of the number of parties $n$.
       for the computation of Gini mean difference for $\epsilon = 1$. 
       The number of discretization bins is set to $t = 256$. 
       Each data point $x_i$ is uniformly drawn from $[0, 1]$.
       For $\Umpc$, we sample $2n$ edges for $|E|$. 
     }
     \label{fig:gini}
   \end{center}
   \vskip -0.12in
 \end{figure}

 \begin{figure}[ht]
   \begin{center}
     \centerline{\includegraphics[width=\columnwidth]{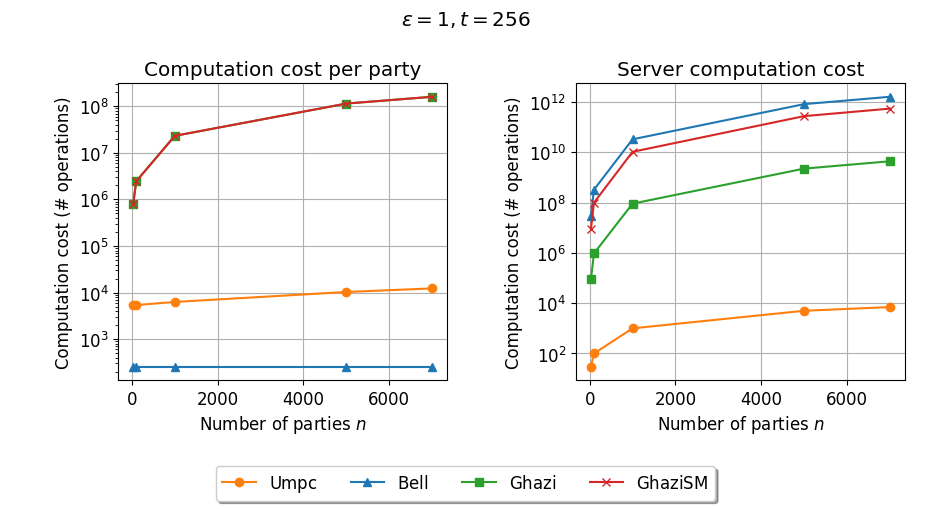}}
     \caption{Online per-party computation cost (left) and 
       online server computation cost (right) as a function of the number of parties,
       for the computation of Gini mean difference for $\epsilon = 1$. 
       The number of discretization bins is $t = 256$. 
       Each data point $x_i$ is uniformly drawn from $[0, 1]$.
	For $\Umpc$, we sample $2n$ pairs for $|E|$.
	}
	\label{fig:gini_computation}
	\end{center}
	\vskip -0.12in
	\end{figure}

 \subsubsection{Kendall's $\tau$ coefficient} 
 \label{subsubsection:kendall}
 We use the Banking Marketing dataset \cite{bank_marketing_222}. 
 Let $x_i = (y_i, z_i)$ be a data point where $y_i, z_i$ represents respectively the age and the average yearly balance for party $P_i$.
 We want to compute Kendall's $\tau$ coefficient for those two variables. 
 
 Figure \ref{fig:communication} presents the online total communication cost and the online total computation cost 
 (defined as the sum of the server computation and $n$ times the per-party computation cost) of the different protocols
 for the computation of the Kendall's $\tau$ coefficient for different values of $t$. 
 Since only the costs of $\Ghazi$ and $\GhaziShuffle$ depend on the DP parameter $\epsilon$, we plot their communication and computational costs 
 for $\epsilon = \{0.1, 1\}$.
 Observe that as the DP parameter $\epsilon$ increases, the protocols $\Ghazi$ and $\GhaziShuffle$ require more communication. 
 For example, $\epsilon \geq 1$, $\Ghazi$ incurs a higher communication cost than $\Umpc$.
 
 For computational cost, our protocol $\Umpc$
 requires to compute the kernel function $f(x_i, x_j) = \text{sign}(y_i - y_j) \text{sign}(z_i - z_j)$, which amounts to evaluate $2$ comparisons
 and one product, which can be performed in $O(\ell)$ operations using Function Secret Sharing (FSS) \cite{boyle2019secure},
 This results in the lowest overall computational cost among all the methods considered when exceeding a certain $t$. \newline
 
 Figure \ref{fig:kendall2} displays the MSE over the number of discretization bins $t$ and the MSE over the online communication cost
 for $\epsilon = \{0.1, 1\}$. 
 We observe that the protocol $\Bell$ requires the lowest communication overhead while our protocol
 $\Umpc$ achieves the lowest MSE.
 
 
 \begin{figure}[ht]
   \begin{center}
     \centerline{\includegraphics[width=\columnwidth]{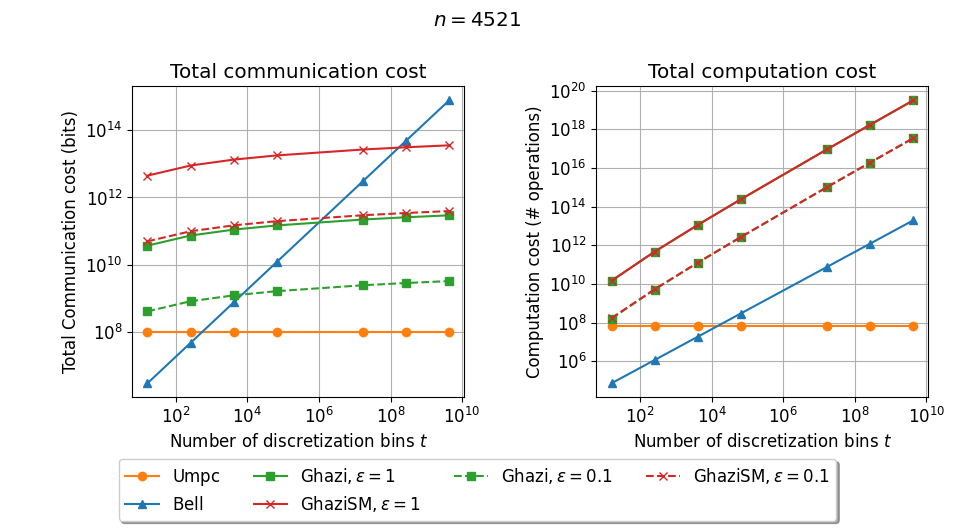}}
     \caption{Online total communication cost (left) and online total computation cost (right)
       for computing the Kendall's $\tau$ coefficient over the number of discretization bins $t$. 
       The total computation cost is defined as the sum of the server computation and $n$ times the per-party computation cost.
       The communication and computation costs vary with $\epsilon = \{0, 1\}$ only for $\Ghazi$ and $\GhaziShuffle$.
       The dataset is taken from \cite{bank_marketing_222} and contains $n = 4521$ entries.
       For $\Umpc$, we sample $2n$ pairs for $|E|$, i.e., $|E| = 9042$.
     }
     \label{fig:communication}
   \end{center}
   \vskip -0.2in
 \end{figure}

 \begin{figure}[ht]
   \vskip 0.2in
   \begin{center}
     \centerline{\includegraphics[width=\columnwidth]{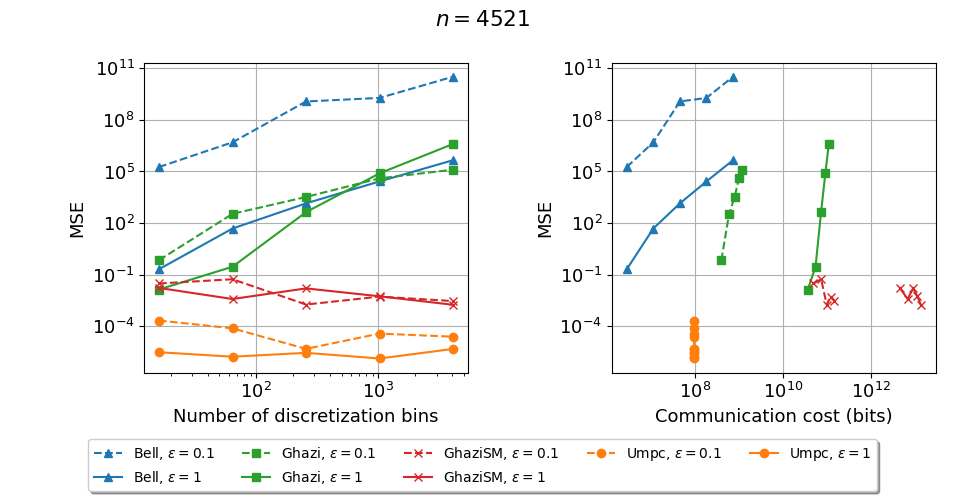}}
     \caption{MSE over the number of discretization bins $t$ (left) and MSE over the online communication cost (right)
       for computing the Kendall's $\tau$ coefficient for $\epsilon = \{0.1, 1\}$. 
       The dataset is taken from \cite{bank_marketing_222} and contains $n = 4521$ entries.
       For $\Umpc$, we sample $2n$ pairs for $|E|$, i.e., $|E| = 9042$.
     }
     \label{fig:kendall2}
   \end{center}
   \vskip -0.2in
 \end{figure}

 \subsubsection{Duplicate Pair Ratio}
 \label{subsubsection:duplicate}
 We focus on computing $U_f$ where $f(x, y) = \mathbb{I}[x = y]$ such that $x, y$ are categorical data points. 
 We use the same dataset as in the previous experiment, i.e., the Bank Marketing dataset \cite{bank_marketing_222}.
 Each data point $x_i$ represents the job of party $P_i$ for $i \in [n]$. 
 In this context, the input space is $\mathbb{X} = \{0, \ldots, 11 \}$.
 We seek to analyze how $\mathsf{BalancedSamp}$ influences the MSE in comparison to Bernoulli sampling
 and sampling without replacement, as well as how the MSE scales with the size of $E$.

Table \ref{tab:dupl} presents the MSE of our protocol $\Umpc$ for $|E|=0.5$.
We can observe that the balanced sampling strategy outperforms the other strategies.

\begin{table}[ht]
  \begin{center}
      \begin{tabular}{ll}
        \hline
        Balanced $\epsilon=1$   & $2.2 10^{-6}$\\
        Uniform $\epsilon=1$    & $2.7 10^{-6} $\\
        Bernoulli $\epsilon=1$  & $9.1 10^{-6} $\\
        \hline
      \end{tabular}      
    \caption{MSE comparison for protocol $\Umpc$ 
      when $E$ is sampled using $\mathsf{BalancedSamp}$ (Algorithm \ref{algo:sampling}), Bernoulli sampling and 
      sampling without replacement
      for computing the duplicate pair ratio over the size of the edge set $E$ for $\epsilon = 1$. 
      The dataset is taken from \cite{bank_marketing_222} and contains $n=4521$ data points.}
    \label{tab:dupl}
  \end{center}
  \vskip -0.2in  
\end{table}


	\subsection{Key takeaways}
	In summary, we reply to the experimental questions:
	\begin{enumerate} 
	\item $\Umpc$ results in the lowest MSE and has competitive communication cost.
	  $\Bell$ requires the lowest per-party computation cost while our protocol requires less per-party computation than Ghazi et al.'s solution.
	  Regarding overall computation cost, our protocol offers a more efficient alternative to existing methods.
	\item $\mathsf{BalancedSamp}$ (Algorithm \ref{algo:sampling}) provides a more effective sampling strategy.
	\end{enumerate}

    \section{Conclusion}
    \label{section:6}
    In this work, we propose the protocol $\prod_\text{U-MPC}$ to compute privately U-statistics.
	Our contribution is two-fold: 
	(1) we propose a generic privacy-preserving approach that leverages MPC techniques to compute U-statistics of degree $k \geq 2$,
	(2) for U-statistic of degree $2$, we reduce the MSE to $O(1 / n^2 \epsilon^2)$ under central $\epsilon$-DP,
	achieving lower error than existing solutions while significantly reducing per-party and server-side computation costs
	and total communication compared to Ghazi et al. \cite{ghazi2024computing}.
	While Bell et al.'s solution \cite{bell2020private} requires less communication and per-party computation, 
	it incurs higher MSE and server-side computation. 

There are several future lines of work.
As not a single strategy is optimal for all criteria, it would be interesting to
investigate whether there is a strategy that can for every application select the best
option. 
Second, further improving MPC protocols and the use we make of them could further decrease the communication and computation cost.




\section*{Impact Statement}

This paper presents work whose goal is to advance the field of Machine
Learning. There are many potential societal consequences of our work, the only
one we feel important to highlight here is the positive impact of
privacy-preserving machine learning which helps to protect the personal data
of while their aggregated statistical properties are of great interest.

\bibliography{uskhd.bib}
\bibliographystyle{icml2026}

\newpage
\appendix
\onecolumn

\section{Additional preliminaries}

\subsection{Data representation}
\label{app:data_representation}
We employ fixed-precision arithmetic to represent fractional numbers with a fixed number of decimals.
To represent an element using fixed-precision arithmetic, we use an element from the finite field $\mathbb{F}_{2^\ell}$ with $\ell \in \mathbb{N}$. 
Recall that $\mathbb{F}_{2^\ell} = \mathbb{Z}_2[X] / (P)$ where $\mathbb{Z}_2[X]$ is the ring of polynomials modulo 2 and $P$ is an irreducible polynomial of degree $\ell$. 
We define the mapping $\mathsf{int}: \mathbb{F}_{2^\ell} \rightarrow \{0, \ldots, 2^\ell - 1\}$ that assigns to each polynomial $Q$ its integer representation obtained by evaluating $Q$ at $X = 2$. 
Let $h = 2^{-c}$ be a fixed scaling factor such that $c \leq \ell$.
We define $\mathbb{Q}_h = \{ \bar{x}^h \in \mathbb{Q} ~|~ \bar{x}^h = \mathsf{int}(x) \cdot h; x \in \mathbb{F}_{2^\ell} \}$
to be the set of rational numbers we can express where $\cdot$ is the multiplication in $\mathbb{Q}$.
We define the function $\mathsf{convert}: \mathbb{Q}_h \to \mathbb{F}_{2^\ell}$
such that $\mathsf{convert}(\bar{a}^h) = \mathsf{int}^{-1}(\bar{a}^h / h) = a$ where $\bar{a}^h \in \mathbb{Q}_h$. 
We say that $a$ is the element from $\mathbb{F}_{2^\ell}$ representing $\bar{a}^h$.

The sign of a value is contained in the most significant bit of its representation. 
For any $x \in \mathbb{F}_{2^\ell}$, let $(b_{\ell - 1}, b_{\ell - 2}, \dots, b_0)$ be the binary decomposition of $x$. Then, $\bar{x}^h = (2 \cdot b_\ell - 1) \cdot \sum_{i = -b}^{\ell - 2} 2^i$. 
The set $\{0, \dots, 2^{\ell-1} - 1\}$ represents the set of positive integers while the set $\{ 2^{\ell-1}, \dots, 2^{\ell} - 1 \}$ represents the negative integers.

\subsection{Secret sharing}
\label{app:secret_share}
Our protocol relies on MPC and more precisely on secret sharing. 
We now describe secret sharing ($\mathsf{SS}$).
A secret shared value $x$ is denoted $\secretS{x}$. 
We will denote by $\secretS{x}_P\in C^n_p$ the set such that for $i\in\secretS{x}_P$, party $\partyi$ gets a share of $x$. We the share of party $\partyi$ by $\secretS{x}_i$.
In many applications of secret sharing, $\secretS{x}_P$ is the same for all secret sharings, typically the set of all parties, however in the protocol we propose this is not necessarily the case.  
	
\begin{definition}[Secret Sharing]
  A $(t, p)$-threshold (information-theoretic secure) Secret Sharing $(\mathsf{SS})$ over a finite field $\mathbb{F}$ allows for sharing a secret $x \in \mathbb{F}$ into $p$ shares such that knowing at least $t$ shares is sufficient for reconstructing the secret. 
  Secret Sharing consists of two algorithms $(\mathsf{Share}, \mathsf{Rec})$ with the following syntax:
  \begin{itemize}
  \item $\mathsf{Share}(t, p, x)$: given the threshold $t$, the number of parties $p$ and $x \in \mathbb{F}$, outputs shares $\secretS{x}_1, \dots, \secretS{x}_p \in \mathbb{F}$. 
    This algorithm runs in polynomial time and is probabilistic.
    
  \item $\mathsf{Rec}(\secretS{x}_{i_1}, \dots, \secretS{x}_{i_{t'}})$: on input $ \secretS{x}_{i_1}, \dots, \secretS{x}_{i_{t'}} $ 
    where $\{ i_1, \ldots, i_{t'} \}$ 
    is a subset of $C^p_{t'}$ of size $t \leq t' \leq p$, outputs the underlying secret $x \in \mathbb{F}$. 
    This algorithm runs in polynomial time and is deterministic.
  \end{itemize}
  
  The algorithms $(\mathsf{Share}, \mathsf{Rec})$ should satisfy the following correctness and security properties:
  
  \begin{itemize}
  \item \textbf{Correctness}: 
    For every input $x \in \mathbb{F}$ and $t \leq t' \leq p$, we have:
    \begin{equation}
      \begin{split}
	\proba \big[ \secretS{x}_1, \dots, \secretS{x}_p \leftarrow \mathsf{Share}(t, p, x); \\ 
	  \forall \{i_1, \ldots, i_{t'} \} \subseteq C^p_{t'},  \mathsf{Rec}(\secretS{x}_{i_{1}}, \dots, \secretS{x}_{i_{t'}}) = x  \big] = 1.
      \end{split}
    \end{equation}	
    
  \item \textbf{Security}:
    For every input $x \in \mathbb{F}$, 
    let $\secretS{x}_1, \dots, \secretS{x}_p \leftarrow \mathsf{Share}(t, p, x)$. 
    Then, for any set $\{ i_1, \ldots, i_{t'}\}$ of size $t' < t$, 
    any set of shares $\secretS{x}_{i_1}, \dots, \secretS{x}_{i_{t'}}$ cannot learn any information on the secret $x$.
  \end{itemize}
\end{definition}


In our scheme, we consider additive secret sharing over $(k, k)$-threshold 
and $(n, n)$-threshold as it it linear and helps us reduce the communication cost.

\paragraph{Additive secret sharing}
Additive secret sharing is a $(p, p)$-threshold secret sharing.
To share a secret $x \in \mathbb{Z}_M$, the algorithm $\mathsf{Share}(p, p, x)$ outputs the values $\secretS{x}_1, \ldots, \secretS{x}_p$ where
$x \equiv \sum_{i \in [p] } \secretS{x}_p \mod M$.
To reconstruct a secret, the algorithm $\mathsf{Rec}(\secretS{x}_1, \ldots, \secretS{x}_p)$ outputs $\sum_{i \in [p] } \secretS{x}_p \mod M$.

\subsection{MPC Preprocessing and communication}
\label{app:mpc_preproc}

Often, algorithms in general and MPC protocols in particular rely on preprocessing 
to minimize the amount of work performed or data exchanged by re-using the same preprocessing result in multiple runs of the main algorithm or the main loop.

Such algorithms are divided into two phases:
\begin{enumerate}
\item offline phase (executed once): parties generate data structures that can be used repeatedly later on.  For example:
  \begin{itemize}
  \item In MPC algorithms relying on randomness, correlated randomness that is independant of the private inputs.
  \item In \cite{ghazi2024computing} and \cite{bell2020private}, the matrix $A \in \mathbb{R}^{t \times t}$ where for $i, j \in [t]$, $A_{i, j} = f(r_i, r_j)$ where $r_i, r_j$ are the representative values for bins $i$ and $j$.
  \end{itemize}
\item online phase (ran for each input): parties use
  the precomputed data structures
  to accelerate the actual computation.
\end{enumerate}
In our setting, the online phase corresponds to computing a U-statistic $U_f$. 
As the offline cost becomes negligible when many such computations are performed,
we focus on the online communication overhead of a single computation of $U_f$. 

We evaluate communication using two different metrics: 
\begin{itemize}
\item round complexity: the number of sequential steps in which parties exchange messages that is needed to complete the protocol,
\item communication complexity: the total volume of data exchanged during the protocol.
  It can be expressed as either the number of times a party sends a share to others or 
  as the number of bits exchanged during the protocol. 
  Here, we adopt the latter (bits exchanged) as this eases the comparison with other approaches.
\end{itemize}

\subsection{Differential Privacy (DP)}
\label{app:dp_def}

Differential Privacy allows for releasing information about a dataset without compromising data of individuals by making the information noisy.  
There is a trade-off between added noise (and hence privacy) and utility.

We say datasets are adjacent if they differ on at most one instance.
There are multiple notions of adjacency.  For example, under \textit{replace adjacency}, we say $D_1$ and $D_2$ are adjacent 
if there is a dataset $D_0$ and instances $x_1,x_2$ such that $D_1=D_0\cup\{x_1\}$ and $D_2=D_0\cup\{x_2\}$. 
 Under \textit{add adjacency}, we say $D_1$ and $D_2$ are adjacent if there is an instance $x$ such that $D_1=D_2\cup\{x\}$ or $D_2=D_1\cup\{x\}$.
        
\begin{definition}[Central DP]
  For $\epsilon > 0$, $\delta > 0$ and an adjacency relation, a randomized algorithm $\mathcal{A}$ is said to provide $(\epsilon, \delta)$-differential privacy (DP) if, for all datasets $D_1, D_2$ that are adjacent and for all possible subsets $O$ of the range of $\mathcal{A}$, 
  \begin{equation}
    \proba[\mathcal{A}(D_1) \in O] \leq e^\epsilon \cdot \proba[\mathcal{A}(D_2) \in O] + \delta.
  \end{equation}	
\end{definition}    	

Now, we can define local DP, which, in practice, applies local noise to every instance individually. 

\begin{definition}[Local DP]
  A local randomized algorithm $\mathcal{A}$ is said to provide $(\epsilon, \delta)$-local differential privacy (LDP) if for all $x, x' \in \mathbb{X}$ and all possible outputs $O$ in the range of $\mathcal{A}$,
  \begin{equation}
    \proba[\mathcal{A}(x) = O] \leq e^\epsilon \cdot \proba[\mathcal{A}(x') = O] + \delta.
  \end{equation}
\end{definition}

If $f:\mathbb{X}^*\to \mathbb{Y}$ is a function and $A:\mathbb{X}\to\mathbb{X}$ is LDP, then
\[
f(\{A(x)\mid x\in D\})
\]
is an LDP application of $f$ on the dataset $D\in\mathbb{X}^*$.
        
We also require the notion of sensitivity which measures the amount of change in the result of a query when adding or removing one's personal data. 

\begin{definition}[Sensitivity]
Let $\mathcal{D}$ be a collection of datasets. 
Let $f: \mathcal{D} \rightarrow \mathbb{R}$. Let $d(x, x')$ represent the distance between dataset $x$ and $x'$ where $| \cdot |$ is the cardinality operator, i.e.,
$d(x, x') = | x \cap x' |$.
E.g., $d(x, x') = 1$ means that $x$ and $x'$ differ in at most one instance. 
The $\ell_1$-sensitivity $\Delta f$ of a function $f$ is defined as:
\begin{equation}
  \Delta f = \max_{\substack{x, y \in \mathcal{D} \\ d(x, y) \leq 1}} | f(x) - f(y) |.
\end{equation}
	
In a similar way, the $\ell_2$-sensitivity is defined as:
\begin{equation}
  \Delta_2 f = \max_{\substack{x, y \in \mathcal{D} \\ d(x, y) \leq 1}} \parallel f(x) - f(y) \parallel_2
\end{equation}
where $\parallel \cdot \parallel_2$ corresponds to the Euclidean norm.
\end{definition}	

To achieve $(\epsilon, \delta)$-DP, one can add controlled noise from known distributions.
In the following, we describe two mechanisms: Laplace and Gaussian mechanisms. 
These mechanisms add noise $\eta$ to the output of a function $f: \mathbb{X}^k \rightarrow \mathbb{R}$ representing a query.

\begin{lemma}[Laplace mechanism]
  \label{lemma:laplace}
  Introduced by Dwork et al. \cite{dwork2006calibrating}, 
  the Laplace mechanism $\mathcal{M}_\text{Lap}$ for a function $f: \mathbb{X}^k \rightarrow \mathbb{R}$ is defined as 
  $\mathcal{M}_\text{Lap}(x, f, \epsilon) = f(x) + \eta_\text{Lap}$ 
  where $x \in \mathbb{X}^k$, $\epsilon \in \mathbb{R}$ and $\eta_\text{Lap} \sim \text{Lap}(0, b)$. 
  If $b \geq \Delta f / \epsilon$, then $\mathcal{M}_\text{Lap}(x, f, \epsilon)$ is $\epsilon$-DP.
\end{lemma}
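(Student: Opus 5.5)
The plan is a direct density-ratio argument; this is the standard proof of the Laplace mechanism, specialized to a real-valued query. Write the Laplace density with scale $b$ as $p_b(z)=\frac{1}{2b}\exp(-|z|/b)$. Then for every input $x$ the output $\mathcal{M}_\text{Lap}(x,f,\epsilon)$ has density $p_b(y-f(x))$ at the point $y\in\mathbb{R}$.

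Fix adjacent inputs $x,x'$, so that $|f(x)-f(x')|\le \Delta f$. For any $y$ I would bound the ratio of densities:
\[
\frac{p_b(y-f(x))}{p_b(y-f(x'))}=\exp\!\Big(\frac{|y-f(x')|-|y-f(x)|}{b}\Big)\le \exp\!\Big(\frac{|f(x)-f(x')|}{b}\Big)\le \exp\!\Big(\frac{\Delta f}{b}\Big).
\]
The first inequality is the reverse triangle inequality. The second uses the definition of the $\ell_1$-sensitivity. Since $b\ge \Delta f/\epsilon$, the final exponent is at most $\epsilon$, so the pointwise ratio is bounded by $e^\epsilon$.

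To pass from densities to events, I would integrate the pointwise bound over an arbitrary measurable $O\subseteq\mathbb{R}$:
\[
\proba[\mathcal{M}_\text{Lap}(x,f,\epsilon)\in O]=\int_O p_b(y-f(x))\,dy\le e^\epsilon\int_O p_b(y-f(x'))\,dy=e^\epsilon\,\proba[\mathcal{M}_\text{Lap}(x',f,\epsilon)\in O].
\]
This is the central DP definition with $\delta=0$, i.e.\ pure $\epsilon$-DP. Because the definition in the paper requires $\delta>0$, I would note that the inequality still holds after adding any $\delta>0$ on the right-hand side.

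There is no real obstacle here. The only points needing care are interpretive. First, ``adjacent'' should mean the adjacency relation used in the definition of $\Delta f$, so that the sensitivity bound applies to exactly the pairs the DP definition quantifies over. Second, the argument is for real-valued $f$; for vector outputs one would use independent coordinate noise and the $\ell_1$ norm, and the same computation goes through coordinatewise.
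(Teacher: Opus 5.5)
Your density-ratio argument is correct, and it is the standard proof from Dwork et al.; the paper does not prove this lemma itself but only states it as background with that citation. The one tacit assumption you use, $\epsilon>0$ (needed to turn $b\ge \Delta f/\epsilon$ into $\Delta f/b\le\epsilon$), is also built into the paper's definition of DP, so nothing is missing.
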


\begin{lemma}[Gaussian mechanism]
  The Gaussian mechanism $\mathcal{M}_\text{Gauss}$ for a function $f: \mathbb{X}^k \rightarrow \mathbb{R}$
  is defined as $\mathcal{M}_\text{Gauss}(x, f, \epsilon, \delta) = f(x) + \eta_\text{Gauss}$ 
  where $x \in \mathbb{X}$, $\epsilon, \delta \in \mathbb{R}$ and $\eta_\text{Gauss} \sim \text{Gauss}(0, \sigma^2)$. 
  If $\sigma^2 \geq \frac{2 \ln(1.25 / \delta) \cdot (\Delta_2 f)^2}{\epsilon^2}$, 
  then $\mathcal{M}_\text{Gauss}(x, f, \epsilon)$ is $(\epsilon, \delta)$-DP.
\end{lemma}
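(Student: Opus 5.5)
The statement is the classical Gaussian mechanism: for $f:\mathbb{X}^k\to\mathbb{R}$ with $\ell_2$-sensitivity $\Delta_2 f$, adding $\eta\sim\text{Gauss}(0,\sigma^2)$ with $\sigma^2\ge 2\ln(1.25/\delta)(\Delta_2 f)^2/\epsilon^2$ gives $(\epsilon,\delta)$-DP. As in \cite{dwork2014algorithmic}, I will bound the privacy loss random variable and show it exceeds $\epsilon$ with probability at most $\delta$. That tail event is then absorbed into the additive $\delta$ term of the DP definition.

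\textbf{Reduction to one dimension.} Fix adjacent datasets $D_1,D_2$ and write $\mu_1=f(D_1)$, $\mu_2=f(D_2)$, so that $|\mu_1-\mu_2|\le\Delta:=\Delta_2 f$. Since the output is a scalar, the two output distributions are $\mathcal{N}(\mu_1,\sigma^2)$ and $\mathcal{N}(\mu_2,\sigma^2)$. By translation and reflection, I may assume without loss of generality that $\mu_1=0$ and $\mu_2=\Delta$; a shift smaller than $\Delta$ only makes the argument easier.

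\textbf{Privacy loss.} For an output $o=\eta$, the privacy loss is
\[
L(\eta)=\ln\frac{p_1(\eta)}{p_2(\eta)}=\frac{(\eta-\Delta)^2-\eta^2}{2\sigma^2}=\frac{\Delta(\Delta-2\eta)}{2\sigma^2}.
\]
Hence $|L(\eta)|\le\epsilon$ whenever $|\eta|\le \sigma^2\epsilon/\Delta-\Delta/2$. Let $B$ be the bad set where this fails. For any measurable set $O$,
\[
\proba[\mathcal{A}(D_1)\in O]\le \proba[\mathcal{A}(D_1)\in O\setminus B]+\proba[\eta\in B]\le e^\epsilon\,\proba[\mathcal{A}(D_2)\in O]+\proba[\eta\in B].
\]
So it suffices to show $\proba[\eta\in B]\le\delta$. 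In fact only the one-sided tail $\eta< \Delta/2-\sigma^2\epsilon/\Delta$ matters, since that is the side on which $L>\epsilon$.

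\textbf{Gaussian tail bound (the main obstacle).} I set $\sigma=c\Delta/\epsilon$ with $c^2\ge 2\ln(1.25/\delta)$. The threshold then becomes $t=\sigma^2\epsilon/\Delta-\Delta/2$, which in units of $\sigma$ is $t/\sigma=c/\epsilon-\epsilon/(2c)$. I apply the standard tail bound $\proba[Z>s]\le \frac{1}{s\sqrt{2\pi}}e^{-s^2/2}$ and need the result to be at most $\delta$. Taking logarithms, this reduces to an inequality of the form
\[
\ln\!\big(\tfrac{2}{\pi}\tfrac{1}{(c/\epsilon-\epsilon/2c)^2}\big)+\big(c/\epsilon-\epsilon/2c\big)^2\ \ge\ 2\ln(1/\delta).
\]
Expanding the square gives $c^2/\epsilon^2-1+\epsilon^2/(4c^2)$, and the problem comes down to checking that $c^2\ge 2\ln(1.25/\delta)$ is enough.

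Two caveats will need careful bookkeeping. First, this calculation assumes $\epsilon\le1$, as in Dwork--Roth; I would note that the condition holds in the regime $\epsilon<1$ and that the constant $1.25$ is tuned exactly for it. For $\epsilon>1$, either the statement must be read with that implicit assumption or a tighter analysis is needed. Second, the precise constants are routine but fiddly, and I would follow the Dwork--Roth appendix calculation verbatim.
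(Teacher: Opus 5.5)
Your route is the standard Dwork--Roth privacy-loss argument. That is all the paper relies on, since it quotes this lemma without proof. However, the computation you call the main obstacle is wrong as written, and the $\epsilon>1$ caveat cannot be settled either of the ways you leave open.

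\textbf{The tail computation.} With $\sigma=c\Delta/\epsilon$ and $t=\sigma^2\epsilon/\Delta-\Delta/2$ one gets $t/\sigma=\sigma\epsilon/\Delta-\Delta/(2\sigma)=c-\epsilon/(2c)$, not $c/\epsilon-\epsilon/(2c)$. So $s^2=c^2-\epsilon+\epsilon^2/(4c^2)$, and the $-\epsilon$ term is exactly where $\epsilon\le 1$ enters: it gives $s^2\ge c^2-1$.

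Your logarithmic inequality is also wrong. Taking logarithms in $\frac{1}{s\sqrt{2\pi}}e^{-s^2/2}\le\delta$ gives $s^2+\ln(2\pi s^2)\ge 2\ln(1/\delta)$. Your display has $\ln\bigl(\frac{2}{\pi}\frac{1}{s^2}\bigr)$ instead, i.e.\ the opposite sign on $\ln s^2$. With that sign the check is false at the boundary $c^2=2\ln(1.25/\delta)$: at $\epsilon=1$ the difference between the two sides is about $-1-\ln s^2$, which is negative for small $\delta$. So the step cannot be completed as you set it up.

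With the correct sign it closes. You have $s^2\ge c^2-1=2\ln(1/\delta)-(1-2\ln 1.25)$, and $\ln(2\pi s^2)\ge 1-2\ln1.25\approx 0.554$ as soon as $s\ge 0.53$. In the remaining case ($\delta>0.74$, where $s>-0.09$) you get $\proba[Z>s]<0.54<\delta$ directly. Your one-sided reduction is fine and even saves the factor $2$ that Dwork--Roth lose by working with $\delta/2$.

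\textbf{The range of $\epsilon$.} No tighter analysis can handle $\epsilon>1$, because the lemma as stated, with no upper bound on $\epsilon$, is false. Take adjacent $D_1,D_2$ with $f(D_2)=f(D_1)+\Delta$, set $\sigma=c\Delta/\epsilon$ with $c^2=2\ln(1.25/\delta)$, and let $O=(-\infty,f(D_1)]$. Then $\proba[\mathcal{M}_\text{Gauss}(D_1)\in O]=1/2$. On the other side, with $\Phi$ the standard normal CDF,
\[
e^\epsilon\,\proba[\mathcal{M}_\text{Gauss}(D_2)\in O]=e^\epsilon\,\Phi(-\epsilon/c)\le\tfrac12 e^{\epsilon-\epsilon^2/(2c^2)}\le\tfrac12 e^{-\epsilon}
\]
once $\epsilon\ge 4c^2$. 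Hence $(\epsilon,\delta)$-DP fails for every $\delta<0.49$ and $\epsilon\ge 8\ln(1.25/\delta)$. Using the optimal event $\{L>\epsilon\}$, it fails much earlier; for example, already at $\delta=10^{-5}$, $\epsilon=10$.

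Your argument can therefore only prove the lemma under the extra hypothesis $\epsilon\in(0,1)$, as in \cite{dwork2014algorithmic}, and that hypothesis must be added to the statement. For large $\epsilon$ one needs a different calibration of $\sigma$, such as the exact analytic one of Balle and Wang, not a sharper proof of this one.
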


\paragraph{Local vs central DP}

Let $D$ be some dataset of size $n$. 
In local DP, each data instance $x_i \in D$ is privatized by adding an appropriate noise term $\mathsf{noise}_i$ for $i \in [n]$.
The aggregator has access to a noisy dataset $D + \sum_{i \in [n]} \mathsf{noise}_i$ on which it can compute $f(D + \sum_{i \in [n]} \mathsf{noise}_i)$.
This allows one to remove the need to trust the aggregator. 
However, in local DP, the amount of noise added is larger compared to central DP where the parties only add one noise term.
Hence, in our case, achieving central DP is preferable as it leads to an increase in utility.

If an MPC protocol exists for a task, an idea for moving from local DP to central DP
is to generate a shared random noise $\eta \in \mathbb{R}$, i.e., 
each party $P_i$ holds a share $\secretS{\eta}_i$ such that $\mathsf{Rec} \left( \{ \secretS{\eta}_i \}_{i \in [n]} \right) = \eta$ 
and no subset of colluding parties learns any information about $\eta$. 
The shared noise $\eta$ is drawn either from a Laplace distribution or a Gaussian distribution depending on the mechanism chosen.
To generate a shared noise, several papers provide solutions ( \cite{eigner2014differentially, keller2024secure, sabater2023private}, ...). 
A non-exhaustive description of the existing protocols for drawing shared noise can be found in Appendix \ref{section:drawing_noise}.

\subsection{Examples of U-statistics}
\label{subsection:example_ustatistic}

Here are some examples of U-statistics with kernel function of degree 2:

\subsubsection{Kendall's $\tau$ coefficient}
In the field of statistics, the Kendall rank correlation coefficient \cite{kendall1948rank} measures the rank correlation of two quantities. 
Let $\{(x_i, y_i)\}_{i \in [n]}$ be a set of observation. 
A pair of observation $(x_i, y_i)$ and $(x_j, y_j)$ for $i, j \in [n]$ is said to be \textit{discordant} if the following holds
\begin{equation}
  (x_i < x_j  \land y_i > y_j) \lor (x_i > x_j \land y_i < y_j).
\end{equation}

The Kendall's $\tau$ coefficient uses multiple comparisons as it can be defined as 
\begin{equation}
  \begin{split}
    \tau &= 1 - \frac{2 (\text{number of discordant pairs}) }{ \binom{n}{2}} \\ 
    &= \frac{1}{ \binom{n}{2} } \sum_{i < j} \text{sign}(x_i - x_j) \text{sign}(y_i - y_j).
  \end{split}
\end{equation}

Under the MPC preprocessing paradigm, parties generate pseudorandom correlations before the main computation (see Appendix \ref{app:mpc_preproc}). 
By using FSS, the sign function can be evaluated with optimal online communication cost ---
requiring only a single round and one call of the reconstruction algorithm $\mathsf{Rec}$ 
(see \cite{boyle2019secure} and \cite{boyle2016function}).
However, this efficiency comes at the cost of a heavy offline phase; 
each FSS key requires $4 \ell (\lambda + 1)$ bits where $\lambda \approx 128$ is the security parameter and
 $\ell = \log_2 | \mathbb{F} |$.
Ultimately, computing Kendall's $\tau$ in this framework requires two FSS instantiations and
one Beaver Triple.

\subsubsection{Gini Mean Difference}
The Gini Mean Difference \cite{gini1912variabilita} is a measure of dispersion that can be expressed 
as the average of the absolute difference of two variables:
\begin{equation}
  \begin{split}
    GMD &= \frac{2}{n (n - 1)} \sum_{1 \leq i < j \leq n} |x_i - x_j | \\
    &= \frac{2}{n (n - 1)}  \sum_{1 \leq i < j \leq n } (2 \cdot \indic[x_i > x_j] - 1 ) ( x_i - x_j )    			
  \end{split}
\end{equation}

When using the preprocessing paradigm, computing the Gini Mean Difference requires two FSS instantiations and one Beaver Triple.
    	
\subsubsection{Area Under the ROC Curve (AUC)}
In machine learning and more precisely in binary classification, the AUC-ROC curve is often used to evaluate the performance of a model. 
It specifies the area under the Receiver Operating Characteristic curve (ROC) which is computed using the True Positive Rate (TPR) 
and the False Positive Rate (FPR) at various threshold settings of the classifier. 
The AUC can be interpreted as the probability that the classifier ranks a randomly chosen positive instance higher than a randomly chosen negative instance. 
Let $ \{ z_i = (x_i, y_i) \}_{i = 1}^n$ be a dataset where $x_i \in \mathbb{X}$ is a data point and $y_i \in \{-1, 1\}$ is its label. 
Let $n_{+}$ (resp. $n_{-}$) be the number of positive (resp. negative) instances, i.e., $n_{+} = | \{ i \in [n] ~:~ y_i = 1 \} |$
(resp. $n_- = | \{ i \in [n] | y_i = -1 \}$).
Let $g$ be the score function that assigns $x_i$ to the confidence probability that $x_i$ belongs to its predicted class and 
let $x_i^+$ (resp. $x_i^-$) be the $i-th$ instance that is a positive (resp. negative) instance for $i \in [n_+]$ (resp. $i \in [n_-]$). 
Then, the AUC is expressed as:

\begin{equation}
  AUC = \frac{1}{n_{-} n_{+}} \sum_{i=1}^{n_+} \sum_{j =1}^{n_-} \indic[g(x_i^+) > g(x_j^-)]
    	\end{equation}
which can be transformed to the U-statistic form with kernel function $f(z_i, z_j) = \mathbb{I}[x_i > x_j \land g(x_i) > g(x_j)] + 
\mathbb{I}[x_i < x_j \land g(x_i) < g(x_j)]$
which can be computed using four comparison and two multiplications.

\subsubsection{Rand Index}
The Rand Index is a measure of similarity of two data clusterings. 
Let $D = \{x_i\}_{i \in [n]} \in \mathbb{X}$ be the set of data points. Let $C_1, C_2 : \mathbb{X} \rightarrow [m]$ be two classification algorithms where $m$ is the number of classes. 
The Rand Index $R$ is defined as:
\begin{equation}
  R = \frac{1}{\binom{n}{2}} \sum_{1 \leq i < j \leq n} \mathbb{I}[C_1(x_i) = C_2(x_j)] 
\end{equation}

Under the preprocessing paradigm, one FSS instantiation is required.

\section{Drawing distributed noise}
\label{section:drawing_noise}

We discuss some strategies to generate secret shared noise from a Laplace or Gaussian distribution.  
For the simplicity of our explanation, we make abstraction of some discretization details, see \cite{canonne2020discrete, eigner2014differentially} for a more in-depth discussion of elements to take into account when using discretized distributions for differential privacy.

\subsection{Laplace Distribution}

\subsubsection{Existing approaches}
In \cite{sabater2023private}, to draw from a Laplace distribution of parameters $(0, b)$, 
the authors rely on the inverse cumulative distribution function of the Laplace distribution which can be expressed as:
\begin{equation}
  F^{-1}(p) = -b \cdot \text{sign}(p - 0.5) \ln(1 - |p - 0.5|)
\end{equation}		
where $|\cdot|$ is the absolute value.
To compute the $\ln$, they use the Cordic algorithm which is an iterative algorithm that requires only additions, bit shifts and comparisons. The distributed version of the Cordic algorithm needs $ O (\ell^2 \log \ell)$ secure multiplications in $O(\ell^2)$ rounds 
if we assume that the number of iterations to converge is of order $O(\ell)$. 
The protocol requires to draw randomly $a'$ uniformly from $[0, L) $ such that $a = -b \cdot \text{sign}(1 - a'/L)$. Then, we can multiply $a$ by a random sign $s$ drawn uniformy from $\{-1, 1\}$. 
  Overall, the communication complexity required to draw from $\text{Lap}(0, b)$ using \cite{sabater2023private} is $O (\ell^2 \log \ell)$ secure multiplications in $ O (\ell^2)$ rounds if $L = 2^\ell$.
  
  In \cite{eigner2014differentially}, the authors distributively sample from $\text{Lap}(0, \frac{\Delta f }{\epsilon})$ 
  by noting that $\text{Lap}(0, m) = \text{Exp}(\frac{1}{m}) - \text{Exp}(\frac{1}{m})$ such that $\text{Exp}(m') = \frac{- \ln x}{m'}$ where $x \sim \text{Uni}( (0, 1])$ 
which reduces the complexity of the protocol to sample uniformly and then computing $\ln$ in a distributed manner. 
This last operation is quite costly as it requires $O (\ell^2)$ secure multiplications in $ O (\ell)$ rounds.

Balle et al. \cite{balle2020private} propose a protocol for privately computing a sum in shuffled model.
Each party $i \in [n]$ holds an input $x_i \in [0, 1]$. 
To generate the required DP noise in a distributed way, the parties jointly sample from a discrete Laplace distribution: 
they use the fact that the sum of $n$ independent differences of two Polya random variables is exactly a discrete Laplace random variable,
i.e., if $X_i$ and $Y_i$ are independant $\text{Polya}(1/n, e^{- 1 / b})$ random variables where $b = \Delta f / \epsilon$, 
then, $Z = \sum_{i \in [n]} X_i - Y_i$ follows $\text{DLap}(e^{ -1 / b})$. 
After adding their local noise shares, the parties send secret shares of their noisy contributions to the shufflers. 
The aggregator then reconstructs the total value and obtains a differentially private estimate of the sum via the discrete Laplace mechanism.

\subsubsection{Adaptation of Balle et al. \cite{balle2020private}}
We observe that the protocol from Balle et al. \cite{balle2020private} 
can realize the functionality $\mathcal{F}_\text{noise}$ for 
$(\epsilon, 0)$-DP in the semi-honest model.
However, because our protocol $\prod_\text{U-MPC}$ operates over
the finite field $\mathbb{F}_{2^\ell}$, the noise must remain below $2^\ell$ to avoid modular wrap-around.

Assuming that $p >> n$, Protocol \ref{protocol:3} generates distributed discrete Laplace noise from Balle et al. \cite{balle2020private}.
This protocol requires $n^2 \cdot \ell$ bits of communication. 

\begin{proto}
  \begin{prot}[Protocol $\prod_\text{DLap}$]
    
    \textbf{Input: }
    \begin{itemize}
      \item $\epsilon > 0$: DP parameter
      \item $s$: sensitivity parameter 
    \end{itemize}

    \textbf{Online phase:}
    \begin{enumerate}
    \item Party $P_i$ draws $a_i, b_i \sim \text{Polya}(1/n, e^{-\epsilon / s})$, for $i \in [n]$.
    \item Party $P_i$ creates shares $\secretS{a_i} \leftarrow \mathsf{Share}(n, n, a_i)$ and
      sends to each party $P_j$ the share $\secretS{a_i}_j$, for $i, j \in [n]$.
    \item Same for $b_i$, for $i \in [n]$.
    \item Party $P_i$ computes $c_i = \sum_{j \in [n]} \secretS{a_j}_i - \secretS{b_j}_i$.
    \item Party $P_i$ returns $c_i$ for $i \in [n]$.
      
    \end{enumerate}
  \end{prot}	
  \caption{Protocol $\prod_\text{DLap}$ to generate distributed discrete Laplace noise from \cite{balle2020private}.}
  \label{protocol:3}
\end{proto}

\begin{remark}[Local distributed noise generation]
  \label{remark:ldng}
  Protocol \ref{protocol:3} can be modified to use only local computations (discard Steps (2) and (3))
  if we assume that each party adds its DP noise share to values that are already uniformly distributed. 
  In Protocol $\prod_\text{U-MPC}$, during the \textit{Noise addition phase}, 
  each party adds its share of the distributed noise $\eta$ to the local sum of shares.
  Because the convolution of a uniform distribution with any independant distribution is itself uniform,
  the resulting $z_i$ remains uniform distributed over $\mathbb{F}_{2^\ell}$ assuming no wrap around. 
\end{remark}

\subsection{Gaussian Distribution}

\subsubsection{Existing approaches}
In \cite{sabater2023private}, the authors consider several techniques to compute a shared noise from the Gaussian distribution. 
We describe in the following two methods:
\begin{itemize}
\item the Central Limit Theorem (CLT) approach introduced by \cite{dwork2006our}. 
  To generate shares of $x \sim \text{Gauss}(0, n/4)$ where $n$ is both the number of parties and the number of coin flips. 
  Each party $P_i$ randomly selects $c_i \leftarrow \binarSet$ and shares it to the other parties. 
  Party $P_i$ gets shares $\secretS{c_1}_i, \dots, \secretS{c_n}_i$.
  Since the parties can deviate from the intended protocol (due to the malicious setting), 
  the bits $c_i$ are considered low quality bits as these bits could have been chosen adversarially.
  Then, imagine having a public source of random bits $\secretS{b_1}, \dots, \secretS{b_n}$, parties can convert the low quality bits $c_i$ into high quality bits by applying $c_i \oplus b_i$ for $i \in [n]$.
  Finally, each party can sum their shares such that $\secretS{x}_i = \sum_{j = 1}^n 2 (\secretS{c_j}_i \oplus \secretS{b_j}_i) - 1$  to get $x \in \{ -1, 1 \}$.
  This is similar to drawing $n$ times an unbiased coin. 
  
  Communication complexity-wise, computing shares of elements from $\binarSet$ requires two secure multiplications. If $\ell = \log_2 | \mathbb{F} |$ when working in $\mathbb{F}$, the communication complexity for one party would require to send $O (n \ell)$ bits in constant rounds without counting the source of public randomness.
  
\item \label{polar} The polar version of Box-Muller generates a pair of random numbers that follows a Gaussian distribution using a uniform source of randomness. 
  Let $(x, y)$ be a pair of random samples uniformly drawn from $(-1, 1)$. Let $U = x^2 + y^2$. We draw pairs $(x, y)$ until $0 < U \leq 1$. 
  Then, $z_1 = x \sqrt{\frac{-2 \ln U}{U}}$ and $z_2 = y \sqrt{\frac{-2 \ln U}{U}}$ are sampled from $\text{Gauss}(0, 1)$.
  To extend this method to the secret sharing setting, the main bottleneck resides in the fact that parties need to draw distributively and uniformly over $(-1, 1)$ in the share domain over $\mathbb{F}$. 
  Such an approach has been suggested by \cite{sabater2023private}.
\end{itemize}

\section{Analysis of Protocol \ref{protocol}}
\label{app:property_details}

We now provide details on the properties described in Sec \ref{section:properties}.

\subsection{Correctness}
\label{app:property_details_correct}

\begin{lemma}[Correctness]
  \label{lemma:correctness}
  The protocol $\prod_\text{U-MPC}$ computes the incomplete U-statistic $\hat{U}_{f, E}$ with additional noise calibrated to the privacy parameters $(\epsilon, \delta)$. 
\end{lemma}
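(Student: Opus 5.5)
The argument will trace the shares through Protocol~\ref{protocol} and use linearity of additive secret sharing to show that the aggregator reconstructs exactly $\eta + \sum_{e\in E} f(x_e)$. After dividing by $|E|$ this gives $\hat U_{f,E} = U_{f,E} + \eta/|E|$. All computations are in $\mathbb{F}_{2^\ell}$ via $\mathsf{convert}$, and we will assume, as the paper does, that no wrap-around occurs, so field identities transfer to $\mathbb{Q}_h$.

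\textbf{Step 1: per-edge shares.} Fix $e\in E$. In Phase~1 each $P_{e_j}$ creates a $(k,k)$ additive sharing $\secretS{x_{e_j}}^{(e)}$ among the parties of $e$. By the functionality $\mathcal{F}_f$ (treated as a black box, per the paper's convention), Phase~2 then gives each $P_{e_j}$, $j\in[k]$, a share $\secretS{f(x_e)}_{e_j}$ satisfying
\begin{equation*}
\sum_{j\in[k]}\secretS{f(x_e)}_{e_j} = f(x_e).
\end{equation*}
Likewise, by the specification of $\mathcal{F}_\text{noise}$, Phase~3 gives $n$ shares with $\sum_{i\in[n]}\secretS{\eta}_i=\eta$. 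Here $\eta$ is drawn from the Laplace or Gaussian mechanism with sensitivity parameter $\delta_G^{max}\Delta f$.

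\textbf{Step 2: regrouping the aggregate.} In Phase~4 the aggregator computes $\mathsf{Rec}(\{z_i\})=\sum_i z_i$, which expands to
\begin{equation*}
\sum_{i\in[n]}\secretS{\eta}_i + \sum_{i\in[n]}\sum_{e\in E_{\{i\}}}\secretS{f(x_e)}_i .
\end{equation*}
This is the main step, since shares belonging to different party groups $e$ are mixed inside each $z_i$. The point is that the pairs $(i,e)$ with $i\in e$, $e\in E$ can be counted either by vertex or by edge. Summing over $i\in[n]$ and then $e\ni i$ is therefore the same as summing over $e\in E$ and then $i\in e$. Since $e$ is a set of $k$ distinct indices, each $(i,e)$ pair appears exactly once, so every share is counted exactly once. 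Commutativity of addition in $\mathbb{F}_{2^\ell}$ then turns the double sum into $\sum_{e\in E} f(x_e)$ by Step~1.

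\textbf{Step 3: conclusion.} Adding the noise part gives $\eta+\sum_{e}f(x_e)$. Converting back to $\mathbb{Q}_h$ under the no-overflow assumption and dividing by $|E|$ yields $U_{f,E}+\eta/|E|$.

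The noise is calibrated to $(\epsilon,\delta)$ because each $x_i$ influences at most $\delta_G^{max}$ summands. The sensitivity of the sum is therefore $\delta_G^{max}\Delta f$, which is exactly the parameter passed to $\mathcal{F}_\text{noise}$. Together with Lemma~\ref{lemma:laplace} or the Gaussian mechanism, this gives the claimed calibration.
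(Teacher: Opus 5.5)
Your proof follows the paper's: both expand $\mathsf{Rec}(\{z_i\})$, regroup the shares edge by edge so that each per-edge sum collapses to $f(x_e)$ by additivity of the sharing, and then divide by $|E|$. You make the vertex/edge double-counting explicit and include the sensitivity calibration, which the paper defers to its privacy section. One small fix: do the arithmetic in $\mathbb{Z}_{2^\ell}$ (the $\mathbb{Z}_M$ of the paper's additive-sharing definition) rather than $\mathbb{F}_{2^\ell}$. Addition in a characteristic-2 field is bitwise XOR, so the ``no wrap-around'' argument would not carry sums over to $\mathbb{Q}_h$.
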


\begin{proof}
  When reconstructing, the aggregator computes:
  \begin{equation*}
    \begin{split}
      {\textstyle{\prod_\text{U-MPC}}}\left([n],E,\epsilon,\delta, \Delta f, \{x_i\}_{i\in[n]}\right) 
      &= \frac{1}{|E|} \mathsf{Rec}( \{ z_i \}_{i \in [n] }) \\
      &= \frac{1}{|E|} \Big( \eta+\sum_{e\in E} \sum_{j\in[k]} \secretS{f(x_e)}_{e_j} \Big) \\
      &= \frac{1}{|E|} \Big( \eta+\sum_{e\in E} f(x_e) \Big) \\
      &= \frac{\eta}{|E|} + U_{f,E}
    \end{split}
  \end{equation*}
\end{proof}

In this equation, the step $ \sum_{j\in[k]} \secretS{f(x_e)}_{e_j} = f(x_e)$ follows from the properties of the additive secret sharing scheme, which reconstructs a secret by simply adding the shares.

\subsection{Privacy and security}
\label{app:property_details_privsec}

Next, we outline the privacy guarantees and why our scheme remains secure in the presence of a semi-honest static adversary (see the threat model in Sec \ref{section:protocol}).

\subsubsection{Informal security}

Let us start with the computation of $f(e)$ for a single edge $e\in E$.
The $k$ involved parties use additive secret sharing, which is a $k$-threshold secret sharing scheme and hence is secure under a honest majority setting.  There are basically two cases: either all $k$ parties in $e$ are corrupted or not all $k$ parties in $e$ are corrupted.  If all parties in $e$ are corrupted, they can collude and reveal $f(e)$, however since they are corrupted parties and collude they could determine $f(e)$ even without being running our protocol.  If not all parties in $e$ are corrupted, they can't reconstruct $f(e)$.

Next, phase 3, the noise generating phase, mainly uses a functionality $\mathcal{F}_{\text{noise}}$, for which we required that a secure implementations is provided.

The aggregation phase is secure as it is basically an addition using additive secret sharing.
Every party owns a number $z_i$, and any set of $n-1$ of these numbers are uniformly randomly distributed because any set of $n-1$ shares $\secretS{\eta}_i$ is uniformly randomly distributed.  Hence, from the numbers $z_i$ nothing can be inferred by the adversary.  

\subsubsection{Formal security proof}

We prove our result using the universal composability (UC) paradigm \cite{canetti2001universally}.  
Within the UC framework, security is defined by comparing two executions.
In the ideal-world execution of a function $g$, parties transmit their inputs $x_1, \ldots, x_n$ to a trusted third party, 
which evaluates the function $g$ and returns the outputs $g(x_1, \ldots, x_n)$ to the parties. 
Demonstrating the security of a protocol then amounts to proving that the real-world execution (where we execute our protocol) is indistinguishable from this ideal-world execution.
In the hybrid model of the UC framework, the real-world execution have access to specific ideal functionalities, in this case $\mathcal{F}_f$ and $\mathcal{F}_\text{noise}$.

\begin{lemma}[Privacy and security guarantees]
  \label{lemma:priv}
  Let $\mathcal{A}$ be a static semi-honest adversary which corrupts at most $n-1$ parties (the dishonest majority model $\disMajModel$).
  Let $\mathcal{F}_f$ be the functionality to compute $f$ and $\mathcal{F}_\text{noise}$ the functionality to compute the shared noise.
  Our protocol $\prod_\text{U-MPC}$ is secure against $\mathcal{A}$ in the ($\mathcal{F}_f$, $\mathcal{F}_\text{noise}$)-hybrid model
  and additionally satisfies central $(\epsilon, \delta)$-differential privacy.
\end{lemma}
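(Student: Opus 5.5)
The proof splits into two parts: a simulation-based security argument in the $(\mathcal{F}_f,\mathcal{F}_\text{noise})$-hybrid model, and a sensitivity computation for the differential privacy claim. Fix a static semi-honest adversary $\mathcal{A}$ corrupting a set $T\subsetneq[n]$, with at least one honest party $h\notin T$. We build a simulator $\mathcal{S}$ that receives the inputs $\{x_i\}_{i\in T}$ and the ideal output $\hat{U}_{f,E}=U_{f,E}+\eta/|E|$. It must produce a view for $T$ that is identically distributed to the real one. The view consists of four parts, simulated phase by phase.

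\emph{Phase 1.} For each edge $e$ and each honest $e_j\in e$, the corrupted parties in $e$ receive at most $k-1$ of the $k$ additive shares of $x_{e_j}$. These are uniform and independent of $x_{e_j}$, so $\mathcal{S}$ samples them uniformly in $\mathbb{F}_{2^\ell}$.

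\emph{Phases 2 and 3.} These are ideal calls. For every edge $e\not\subseteq T$, $\mathcal{S}$ returns uniform shares $\secretS{f(x_e)}_i$ for $i\in e\cap T$, which is correct because $\mathcal{F}_f$ outputs a fresh additive sharing. For edges $e\subseteq T$, $\mathcal{S}$ knows $x_e$ and can compute a consistent sharing itself. Likewise, $\mathcal{S}$ returns uniform $\secretS{\eta}_i$ for $i\in T$, since $\mathcal{F}_\text{noise}$ outputs an $(n,n)$ sharing and $|T|\le n-1$.

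\emph{Phase 4.} The corrupted parties' $z_i$, $i\in T$, are determined by what has already been simulated. If the aggregator is corrupted, $\mathcal{A}$ also sees all $z_i$ with $i\notin T$. The honest $z_i$ must be uniform subject to
\[
\sum_{i\in[n]} z_i = |E|\,\hat{U}_{f,E}.
\]
So $\mathcal{S}$ samples $z_i$ uniformly for $i\notin T\cup\{h\}$ and sets $z_h=|E|\hat{U}_{f,E}-\sum_{i\neq h}z_i$.

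\textbf{Main obstacle: joint uniformity.} The delicate point is showing that the honest $z_i$ are jointly uniform given everything else in the view. Each honest $z_i$ contains $\secretS{\eta}_i$. The honest shares $\{\secretS{\eta}_i\}_{i\notin T}$ are uniform subject only to their sum equalling $\eta-\sum_{i\in T}\secretS{\eta}_i$, and they are independent of the $f$-shares. Adding the fixed $f$-share sums to them is a bijection, so the honest $z_i$ are uniform subject to a single linear constraint, namely the output. I would also argue that the Phase 1 and Phase 2 shares, taken across many edges, are mutually independent. This holds because every sharing is fresh, and because $\mathcal{F}_f$'s outputs do not depend on the input shares beyond the value $f(x_e)$. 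Hence the hybrids coincide exactly, and by the UC composition theorem security follows in the hybrid model.

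\textbf{Differential privacy.} The only value revealed is $\hat{U}_{f,E}$, so it suffices to show that $\sum_{e\in E}f(x_e)+\eta$ is $(\epsilon,\delta)$-DP. Changing $x_i$ affects only the terms $e\in E_{\{i\}}$, of which there are at most $\delta_G^{max}$. Each such term changes by at most $\Delta f$, so the sum has sensitivity at most $\delta_G^{max}\Delta f$. This is exactly the parameter passed to $\mathcal{F}_\text{noise}$. By Lemma~\ref{lemma:laplace}, or by the Gaussian mechanism, the noisy sum is $(\epsilon,\delta)$-DP. Division by $|E|$ is post-processing, and since the real view is simulatable from the output alone, the whole view of $\mathcal{A}$ inherits this DP guarantee.
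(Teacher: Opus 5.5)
Your proposal is correct and follows essentially the same route as the paper. Both build a phase-by-phase simulator in the $(\mathcal{F}_f,\mathcal{F}_\text{noise})$-hybrid model that relies on any proper subset of additive shares being uniform and independent of the secret, and both bound the sensitivity of $\sum_{e\in E} f(x_e)$ by $\delta_G^{max}\Delta f$. The only difference is where the output is programmed: the paper embeds $\hat{U}_{f,E}$ in the simulated noise shares $\secretS{(\hat{U}_{f,E}-\sum_{e\in E} r'_e)/|E|}$, whereas you give corrupted parties uniform noise shares and fix an honest $z_h$ from the total. Your version handles fully corrupted edges and a corrupted aggregator more cleanly, and it makes explicit why the adversary's whole view, not only the output, inherits the $(\epsilon,\delta)$-DP guarantee.
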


\begin{proof}
  We aim to show that there exists a simulator $\mathcal{S}$, given access to the ideal functionalities $\mathcal{F}_f$ and $\mathcal{F}_\text{noise}$,
  that can generate an indistinguishable transcript from the one produced during the real-world execution of the protocol.

  Let $\mathcal{S}$ be a simulator. 
  Let $H \subseteq [n]$ be the set of honest parties such that $1 \leq |H| \leq n$. 
  $\mathcal{S}$ receives the computed U-statistic $\hat{U}_{f, E}$ from the ideal functionality.
  During the \textit{Sharing phase}, parties distribute their respective inputs using $(k, k)$-threshold $\mathsf{SS}$.
  Let $e = (e_1, \ldots, e_k) \in E$ be an edge of the graph $G$.
  There are two cases to consider:
  \begin{enumerate}
  \item all parties on edge $e$ are corrupted, i.e., $| H \cap e | = 0$. 
    In that case, the corrupted parties $P_{e_1}, \ldots, P_{e_k}$ can communicate 
    following the communication pattern imposed by the adversary $\mathcal{A}$ and the simulator $\mathcal{S}$ does not intervene. 
    Then, during the \textit{Computing phase}, $\mathcal{S}$ invokes $\mathcal{F}_f$ on the inputs of the corrupted parties and returns the output to the adversary.
    For $j \in [k]$, each party $P_j$ on $e$ computes $y_j$ accordingly.
  \item Edge $e$ includes $d_e = | H \cap e | \geq 1$ honest parties. 
    Since the simulator $\mathcal{S}$ does not know the honest parties' inputs $x_j$, 
    it instead generates shares of a random value $r_e \in \mathbb{F}$ 
    and distributes them to the $k - d_e$ other parties. 
    By definition of $(k, k)$-threshold $\mathsf{SS}$, any $k - 1$ shares is uniformly distributed over $\mathbb{F}$.
    During the \textit{Computing phase}, 
    the simulator $\mathcal{S}$ selects another random value $r'_e \in \mathbb{F}$, 
    secret-shares it among the parties on edge $e$. Since the adversary observes $k - d_e < k$ shares, 
    it cannot distinguish between the simulated shares from valid shares of the real computation result $f(x_e)$.
    Then, each party $P_i$ computes $y_i = \sum_{e \in E_{ \{ i \} }} \secretS{r'_e}_i$.  
    
  \end{enumerate}
  During the \textit{Noise addition phase}, $\mathcal{S}$ simulates the functionality $\mathcal{F}_\text{noise}$ by sending to 
  each party $P_i$ the share $\eta_i = \secretS{ (\hat{U}_{f, E} - \sum_{e \in E} r'_e) / |E|}_i$ for $i \in [n]$.
  Each party $P_i$ computes $z_i = y_i  + |E| \cdot \eta_i$ and send it to the aggregator.
  Hence, the aggregator computes 
  \begin{equation*}
    \begin{split}
      z &= \mathsf{Rec}( \{ y_i \}_{i \in [n]}) + \mathsf{Rec}( \{|E| \cdot \eta_i\}_{i \in [n]} ) \\
      &= \sum_{e \in E} r'_e + |E| \cdot (\hat{U}_{f, E} - \sum_{e \in E} r'_e) / |E| \\
      &=\hat{U}_{f, E}.
    \end{split}
  \end{equation*}
  
  At the end, the view of the corrupted parties in the real world is indistinguishable from the simulation that $\mathcal{S}$ produces.
  
  In terms of privacy, the algorithm satisfies central DP 
  due to the noise $\eta$ introduced during the \textit{Noise addition phase}, details are provided in Section \ref{sec:dp_proof}.
  
\end{proof}

\subsubsection{Privacy}
\label{sec:dp_proof}

There remains to show that $\eta/|E|$ is the appropriate amount of noise.
To establish this, let us consider the sensitivity $\Delta U_{f,E}$.
Applying the definition of sensitivity,
\[
\Delta U_{f,E} = \max\{\|U_{f,E}(x)-U_{f,E}(x')\| \,\mid\, x,x'\in \mathbb{X}^* \wedge \text{adjacent}(x,x') \}
\]
Now consider adjacent datasets $x$ and $x'$.  Let $i\in [n]$ such that the $i$-th instance is the only one in which $x$ and $x'$ differ, i.e., $x_i \neq x^\prime_i$ and $\forall j\neq i: x_j= x^\prime_j$.
Then, for an edge $e\in E$, if $i\in e$ there holds $\|f(x_e)- f(x^\prime_e)\| \le \Delta f$ while $i\not\in e$ there holds $\|f(x_e) - f(x^\prime_e)\| = 0$.  There follows
\begin{eqnarray*}
  \|U_{f,E}(x)-U_{f,E}(x')\|
  &\le& \frac{1}{|E|} \sum_{e\in E} \|f(x_e) - f(x^\prime_e)\| \\
  &=& \frac{1}{|E|} \sum_{e\in E_{\{i\}}} \|f(x_e) - f(x^\prime_e)\| \\
  &\le& \frac{1}{|E|} \delta_G^{max} \Delta f \\
  &=& \frac{\delta_G^{max} \Delta f}{|E|}
\end{eqnarray*}

Hence, $\Delta U_{f,E} = \delta_G^{max} \Delta f/ |E|$.
As the protocol ensures that a noise term $\eta$ is sufficient to privatize a function with sensitivity $\delta_G^{max} \Delta f$,
a noise term $\eta/|E|$ is sufficient to privatize a function with sensitivity $\delta_G^{max} \Delta f/|E|$ such as $U_{f,E}$.


\subsection{Communication complexity}
\label{sec:comm_cost}

We recall that we compute $\hat{U}_{f, E}$ in the preprocessing paradigm (see subsection \ref{app:mpc_preproc}).
	
\begin{lemma}[Communication complexity]
  \label{lemma:comm}
  Let $\CCNoise$, $\CRNoise$ be respectively the number of bits exchanged and the number of rounds to generate a shared noise.
  Let $\CCf$, $\CRf$ be respectively the number of bits exchanged and the number of rounds during the online phase for the computation of $f$.
  In the online phase, 
  the protocol $\prod_\text{U-MPC}$ requires $|E| k (k - 1) \ell + |E| \cdot \CCf + \CCNoise + n \ell = O ( |E| (k^2 \ell + \CCf) + \CCNoise )$ bits of communication in $\max(\CRf , \CRNoise)+2$ rounds.
\end{lemma}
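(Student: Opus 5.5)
The plan is to count the bits sent in each of the four online phases of Protocol~\ref{protocol} separately and add them up. Then I will bound the number of rounds by examining which phases depend on each other and which can run in parallel.

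\emph{Sharing phase.} Fix an edge $e\in E$. Each of its $k$ endpoints $P_{e_j}$ splits $x_{e_j}$ into $k$ additive shares over $\mathbb{F}_{2^\ell}$. It keeps one share and sends one $\ell$-bit share to each of the other $k-1$ members of $e$. This gives $k(k-1)\ell$ bits per edge, so $|E|k(k-1)\ell$ bits in total, all sent in a single round.

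\emph{Computing phase.} By definition of $\CCf$ and $\CRf$, each call to $\mathcal{F}_f$ costs $\CCf$ bits and $\CRf$ rounds. Summing over the edges gives $|E|\cdot\CCf$ bits. All edges can be processed simultaneously, since the calls for different edges do not depend on one another, so this phase takes $\CRf$ rounds.

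\emph{Noise phase.} This phase costs $\CCNoise$ bits in $\CRNoise$ rounds by definition. It does not depend on the inputs, so it can start at the very beginning and overlap with the sharing and computing phases.

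\emph{Aggregation phase.} Step~4(a) is purely local, because additive shares are simply summed. In step~4(b), each of the $n$ parties sends the single $\ell$-bit element $z_i$, which costs $n\ell$ bits and one round.

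Adding the four contributions gives $|E|k(k-1)\ell+|E|\CCf+\CCNoise+n\ell$. For the $O(\cdot)$ form, I use that $k(k-1)\le k^2$ and absorb the $n\ell$ term. The absorption works under the standing assumption that every party lies in at least one edge, which gives $|E|k\ge n$ and hence $n\ell\le |E|k\ell\le|E|k^2\ell$. Alternatively, any nontrivial noise protocol already has $\CCNoise\ge n\ell$, since each party must receive a share.

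For the rounds, the critical path is one round of sharing, then $\CRf$ rounds of computing, then one aggregation round. The noise branch takes $\CRNoise$ rounds and is followed only by the aggregation round. Running the two branches in parallel gives at most $\max(\CRf+1,\CRNoise)+1\le\max(\CRf,\CRNoise)+2$ rounds.

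The main obstacle is justifying the parallel schedule and the absorption of $n\ell$. I expect the first to be mostly bookkeeping. For the second, I would state explicitly the regularity assumption from $\mathsf{BalancedSamp}$, which gives every vertex degree at least $\lfloor|E|k/n\rfloor\ge1$.
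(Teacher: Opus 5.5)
Your proposal is correct and follows the paper's proof. The paper also tallies the four phases as $|E|k(k-1)\ell$, $|E|\cdot\CCf$, $\CCNoise$ and $n\ell$ bits, using one round for sharing, $\CRf$ parallel rounds for $f$, $\CRNoise$ rounds for the noise and one aggregation round; your explicit schedule giving $\max(\CRf+1,\CRNoise)+1$ rounds and your absorption of $n\ell$ via $n\le k|E|$ are details the paper leaves implicit. Drop the fallback claim $\CCNoise\ge n\ell$, though, because Remark~\ref{remark:ldng} lets the noise shares be generated locally with no communication.
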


\begin{proof}
  We measure communication complexity in terms of bits exchanged phase by phase. 
  
  \begin{enumerate}
  \item In the \textit{Sharing phase},
    for every of the $|E|$ edges, each of the $k$ involved parties 
    has to send $k - 1$ shares of size $\ell$ bits to the other parties in the same edge. 
    Hence, this step requires  $|E| k (k - 1) \ell$ bits.
    The sharing can be done in parallel, hence it requires only one round. 
    
  \item The communication complexity of the \textit{Computing phase} depends on the kernel function $f$. 
    It also depends on the number of edges in $E$. Hence, it requires $|E| \cdot \CCf$ bits.
    Since each evaluation of $f$ can be parallelized, it requires $\CRf$ rounds of communication.
    
  \item The \textit{Noise addition phase} creates a shared noise according to some chosen protocol.
    It requires an exchange of $\CCNoise$ bits in $\CRNoise$ rounds.
    
  \item The last step requires each of the $n$ parties to send $\ell$ bits to the aggregator in one round.
  \end{enumerate}	
  
\end{proof} 

Since constructing a communication-efficient offline is of independant interest, 
we only give the communication complexity of the online phase.
Please refer to \citep{abram2022low, frederiksen2015unified, boyle2020efficient} for efficient protocols to generate correlated randomness. 


\section{Comparative Evaluation}

\subsection{Details and Proofs for MSE}
\label{section:mse}

Let the population quantity be 
$$U_{f} = \esp_{X_1, X_2 \sim P_\mathbb{X}} [f(X_1, X_2)].$$
The U-statistic $U_{f, C^n_k}$ defined in Definition \ref{definition:u_statistic} is an unbiased estimator of $U_f$.
While Bell et al. \cite{bell2020private} analyze the error between their estimator and the population quantity $U_f$,
our focus is instead on the error between the estimator and the sample U-statistic $U_{f, C^n_2}$.

We provide in the following a detailed focus on the MSE of each protocol under $\epsilon$-DP for pairwise data, i.e., $k=2$
assuming that $\mathbb{Y} = [0, 1]$ and $f$ is $L_f$-Lipschitz.

\subsubsection{Bell}

\newcommand{\ustatPop}{U_f}
\newcommand{\bellPop}{\ustatPop}
\newcommand{\bellSample}{U_{f,C_2^n}}
\newcommand{\bellDiscr}{U_{f,C_2^n,\pi}}
\newcommand{\bellApprox}{\bar{U}_f}
\newcommand{\bellRnd}{\mathcal{R}}

To estimate the population statistic $U_f$,
Bell et al. \cite{bell2020private} compute a DP unbiased estimator of $\bellPop$ using a sample of size $n$.
First, the population statistic $\bellPop$ is approximated by evaluating it on a finite sample, in particular
\[  \bellSample = \comb{n}{2}^{-1}\sum_{(x_1,x_2)\in C_2^n} f(x_1,x_2) \]
is a sum over all $\comb{n}{2}$ pairs $(x_1,x_2)$ of a sample of size $n$, 
Next, $\bellSample$ is approximated by discretizing the data, in particular
\[  \bellDiscr  = \comb{n}{2}^{-1} \sum_{(x_1,x_2)\in C_2^n} f_A(\pi(x_1),\pi(x_2)) \]
where $\pi:\mathbb{X}\to[t]$ is a discretization function and $f_A(i,j) = e_i^\top A e_j$ where
$e_i\in\mathbb{R}^t$ is a vector with a $1$ on position $i$ and $0$ elsewhere,
and $A$ is a matrix with $A_{i,j}=f(\gamma(i),\gamma(j))$ where $\gamma(i)\in\pi^{-1}(i)$ is a representative for
the $i$-th bin of the descretization $\pi$.
Finally, 
\[
\bellApprox = \comb{n}{2}^{-1} \sum_{(x_1,x_2)\in C_2^n} \hat{f}_A(\bellRnd(\pi(x_1)),\bellRnd(\pi(x_2)))
\]
where $\bellRnd:[t]\to[t]$ is an $\epsilon$-LDP randomizer and $\hat{f}_A$ is a function such that
$\hat{f}_A(\bellRnd(\pi(x_1)),\bellRnd(\pi(x_2)))$ is an unbiased estimator of $f_A(\pi(x_1),\pi(x_2))$.
In particular,  $\bellRnd:[t]\to [t]$ is a randomizer which returns a uniformly distributed element from $[t]$ with probability $\beta$ and returns its input with probability $(1-\beta)$, i.e., for all $x,y\in[t]$
\[\mathbb{P} \left[ \bellRnd(x) = y \right] = \beta / t + (1 - \beta) \mathbb{I}[x = y],\]
and
\[\hat{f}_A(i,j) = (1-\beta)^{-2}(e_i - b)^\top A (e_j - b)\]
where $b\in\mathbb{R}^t$ is a vector with in every component $\beta/t$.

Comparing to the population statistic $\bellPop$
the mean squared error $E_{tot}=\mathbb{E}\left[(\bellPop-\bellApprox)^2\right]$ can be decomposed as:
\[
E_{tot} = E_{sample} + E_{discr} + E_{DP}
\]
with
\begin{eqnarray*}
  E_{sample} &=& \mathbb{E}_{x_1 \ldots x_n}\left[(\bellPop - \bellSample)^2\right]\\
  E_{discr} &=&  (\bellSample-\bellDiscr)^2\\
  E_{DP} &=&  \mathbb{E}_{\bellRnd}\left[(\bellDiscr-\bellApprox)^2\right]
\end{eqnarray*}

Let us now derive or review bounds on each of these errors.

%

\paragraph{The effect of local DP}

Bell et al. \citep[Appendix~A.1]{bell2020private} prove that
if $\forall x, y: f(x, y)\in[0,1]$:
\begin{equation}
  \label{eq:EDP.1}
E_{DP} \le  \left(\frac{1}{n(1-\beta)^2} + \frac{(1+\beta)^2}{2n(n-1)(1-\beta)^4}\right)
\end{equation}
To achieve $\epsilon$-LDP, we need that $\mathbb{P} \left[\bellRnd(i)=i \right] \le e^\epsilon \mathbb{P}(\bellRnd(i)=j)$.
We have that $\mathbb{P}(\bellRnd(i)=i) = (1-\beta) + \beta/t$
and for $j\neq i$ that $ \mathbb{P}(\bellRnd(i)=j) = \beta/t$, hence there must hold
$ (1-\beta) + \beta/t \le e^\epsilon( \beta/t)$
which is equivalent to $\beta\ge((e^\epsilon - 1)/t+1)^{-1} $.  This condition is satisfied if there holds
$\beta\ge((\epsilon/t+1)^{-1} )$.
Combining this with Eq (\ref{eq:EDP.1}) one can see that for large $t$ and small $\epsilon$
we have approximately
\begin{equation}
E_{DP} \approx \frac{t^2}{n\epsilon^2}
\end{equation}

\paragraph{The effect of discretization}

We have that
\[ \bellDiscr = \frac{1}{\comb{n}{2}} \sum_{(i,j)\in C^n_2} f(\gamma(\pi(x_i)) ,\gamma(\pi(x_j)))\]
and
\[ \bellSample = \frac{1}{\comb{n}{2}} \sum_{(i,j) \in C^n_2} f(x_i, x_j)\]

Let $f$ be $L_f$-Lipschitz, i.e., $\forall x,x',y, y': |f(x,y)-f(x',y')|\le L_f (|x-x'| + |y - y'|)$.
Moreover, let us assume that $\forall x:f(x, y)\in[0,1]$, $\pi(x) = \lceil xt+1/2\rceil$ and $\gamma(i)=i-\frac{1}{2}$.
Then, for all $i,j\in[n]$:
\[\left|f(x_i,x_j)-f( \gamma(\pi(x_i)) , \gamma(\pi(x_j)))\right| \le \frac{L_f}{t}\]
There follows, $\left|U_{f,C^n_2, \pi} - U_{f, C^n_2} \right| \le \frac{L_f}{t}$, and
\begin{equation}
  E_{disc} \le \frac{L_f^2}{t^2}
\end{equation}

\paragraph{The effect of working with a finite sample}

This can be computed using the variance of Hoeffding \cite{hoeffding1992class}:
$$\var(U_f, U_{f, C^n_2}) = \frac{2}{n (n - 1) } ( 2(n - 2) \sigma_1 + \sigma_2)$$

where $\sigma_1 = \underset{x_1}{\var} ( \underset{x_2}{\esp} (f(x_1, x_2))) $ and $\sigma_2 = \underset{x_1, x_2}{\var}(f(x_1, x_2))$.

At the end, the error between the estimator and the U-statistic sample is defined as:
\begin{equation}
	\begin{split}
		E_{tot} - E_{sample} &= E_{DP} + E_{discr} \\
			&\leq \frac{t^2}{n \epsilon^2} + \frac{L_f^2}{t^2}.
	\end{split}
\end{equation}

\subsubsection{Our protocol}
\label{subsubsection:ours}

\newcommand{\oursPop}{\ustatPop}
\newcommand{\oursSample}{\bellSample}
\newcommand{\oursIncompl}{U_{f,E}}
\newcommand{\oursFinal}{\hat{U}_{f, E}}

Our Protocol \ref{protocol} computes an unbiased estimator of the population statistic $\oursPop$. 
\begin{enumerate}
  \item First, the population statistic $\oursPop$ is approximated by using a sample $\{x_i \}_{i \in [n]}$ to obtain $\oursSample$. 
\item Then, $\oursSample$ is approximated by sampling a set $E \subset C^n_2$.
In particular,
\[
\oursIncompl = \frac{1}{|E|} \sum_{(i, j) \in E} f(x_i, x_j)
\]
is an unbiased estimator of $\oursSample$.
\item At the end, 
the protocol \ref{protocol} uses the Laplace mechanism (see Lemma \ref{lemma:laplace}) to obtain
\[
\oursFinal = \oursIncompl + \eta
\]
where $\eta \sim Lap(0, s / \epsilon)$ with $s = \delta_G^{max} \Delta f$ determined in Appendix \ref{app:property_details}.
\end{enumerate}

We write the total MSE for our protocol as
\[
E_{tot} = E_{sample} + E_{inc}  + E_{DP}
\]
where
\begin{eqnarray*}
  E_{sample} &=& \esp_{x_1 \ldots x_n}\left[(\oursPop - \oursSample)^2\right]\\
  E_{inc} &=&  \esp_E \left[(\oursSample-\oursIncompl)^2\right]\\
  E_{DP} &=&  \esp_{\bellRnd}\left[(\oursIncompl-\oursFinal)^2\right]
\end{eqnarray*}

Let us bound each of these error terms.


\paragraph{The effect of working with an incomplete U-statistic}

Let us compare the incomplete U-statistic $U_{f, E}$ to the U-statistic of the sample $U_{f, C^n_2}$.
While it is more difficult to accurately write a closed form expression 
for $E_{inc}$ if $E$ is sampled using $\mathsf{BalancedSamp}$ (Algorithm \ref{algo:sampling}) than if it is sampled uniformly, 
we can see $U_{f,E}$ as the mean of a stratefied sample, implying that its variance compared to $U_{f,C_2^n}$ 
is smaller than the corresponding statistic with a uniformly sampled $E$.  
We can hence bound $E_{inc}$ by the MSE $E'_{inc}$ induced when $E$ is sampled uniformly over $C^n_2$.

\begin{claim}
	\label{claim:sampling}
	Under the sampling procedure $\mathsf{BalancedSamp}$ (Algorithm \ref{algo:sampling}) for the set $E$ and $\mathbb{Y} \in [0, 1]$, 
	let $N = \comb{n}{2}$ and $m = |E|$ such that 
	\[
		E_{inc} \leq \frac{N - m}{4m (N - 1)}.
	\]
\end{claim}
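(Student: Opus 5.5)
The plan is to reduce to simple random sampling without replacement, as the preceding paragraph suggests, and then apply the classical finite-population variance formula. The values $f(x_e)$ for $e \in C^n_2$ form a finite population of $N=\comb{n}{2}$ numbers $a_1,\dots,a_N\in[0,1]$ with mean $\bar a = U_{f,C^n_2}$ and population variance $S^2=\frac1N\sum_{r}(a_r-\bar a)^2$. Throughout, we condition on the sample $x_1,\dots,x_n$, so only the randomness of $E$ matters.

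\textbf{Step 1: reduction to uniform sampling.} Let $E'$ be a uniformly random $m$-subset of $C^n_2$. We argue that $E_{inc}\le E'_{inc}:=\esp_{E'}[(\bar a-U_{f,E'})^2]$. The intended argument is that $\mathsf{BalancedSamp}$ produces a stratified sample whose per-vertex degrees are forced to be $\lfloor 2m/n\rfloor$ or $\lceil 2m/n\rceil$. Its estimator $U_{f,E}$ remains unbiased for $\bar a$, since every pair has marginal inclusion probability $m/N$ by symmetry of the algorithm under relabelling of vertices. Its variance is then no larger than that of simple random sampling, by the standard result that proportional stratification never increases variance (between-strata variance is removed). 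This is the main obstacle: $\mathsf{BalancedSamp}$ is not literally proportional stratification. The first thing to try is to view it as sampling conditioned on the event $\{G \text{ nearly regular}\}$. Because that event is invariant under vertex permutations, it preserves the first- and second-order inclusion probabilities $\pi_e$ and $\pi_{e,e'}$, which depend only on whether $e,e'$ share a vertex. We would then compare the variance via the Horvitz--Thompson expression $\sum_{e,e'}(\pi_{ee'}-\pi_e\pi_{e'})a_ea_{e'}$. If the comparison cannot be closed rigorously, we would state it, as the paper does, as the stratification heuristic and prove the bound for $E'$.

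\textbf{Step 2: the variance of uniform sampling.} For $U_{f,E'}=\frac1m\sum_{e\in E'}a_e$, the inclusion indicators satisfy $\Pr[e\in E']=m/N$ and $\Pr[e,e'\in E']=\frac{m(m-1)}{N(N-1)}$. A routine computation then gives
\[
\esp\big[(U_{f,E'}-\bar a)^2\big]=\frac{S^2}{m}\cdot\frac{N-m}{N-1}.
\]

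\textbf{Step 3: bounding the population variance.} Since every $a_r\in[0,1]$, Popoviciu's inequality gives $S^2\le \bar a(1-\bar a)\le 1/4$. Combining this with Step 2 yields $E'_{inc}\le \frac{N-m}{4m(N-1)}$. Step 1 then transfers the bound to $E_{inc}$, which is the claim.
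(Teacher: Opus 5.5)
Your route is the paper's: reduce to uniform sampling without replacement, use the finite-population formula $\frac{N-m}{m(N-1)}S^2$, and bound $S^2\le 1/4$. Steps 2 and 3 are correct. The gap is Step 1, which the paper also only asserts by the stratification heuristic. It cannot be closed, because it is false.

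$\mathsf{BalancedSamp}$ is not a stratified design. Pairs are not partitioned into strata that are sampled independently, so the ``stratification never hurts'' result does not apply. Vertex-permutation invariance only says that $\pi_{ee'}$ depends on the overlap type of $e,e'$. It does not say that these values match uniform sampling.

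Carrying out your Horvitz--Thompson comparison shows the failure. By $S_n$-invariance, the covariance $\Sigma_{ee'}=\pi_{ee'}-\pi_e\pi_{e'}$ lies in the Bose--Mesner algebra of the Johnson scheme. It therefore acts as scalars $\lambda_0,\lambda_1,\lambda_2$ on three subspaces: the constants, the $(n-1)$-dimensional space of vertex-induced functions $g(i)+g(j)$, and the $n(n-3)/2$-dimensional remainder.
\begin{itemize}
\item Fixed size gives $\lambda_0=0$.
\item When $2m/n\in\mathbb{Z}$, the budget $nM=2m$ is exactly used up by $m$ distinct edges, so any duplicate draw forces a restart. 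Every output is then an exactly $(2m/n)$-regular graph, so all degrees are deterministic and $\lambda_1=0$.
\item For any design with $\pi_e=m/N$, $\mathrm{tr}\,\Sigma=\sum_e\pi_e(1-\pi_e)=m(N-m)/N$.
\end{itemize}
Hence $\lambda_2=\frac{2m(N-m)}{Nn(n-3)}$. This is strictly larger than the uniform value $\frac{m(N-m)}{N(N-1)}$, since $N-1=\frac{(n-2)(n+1)}{2}>\frac{n(n-3)}{2}$. Balancing removes the vertex-level part of the variance but inflates the pairwise-residual part. It therefore loses whenever the vertex-level part is small, for example when all vertex sums $\sum_{j\neq i}f(x_i,x_j)$ are equal.

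Here is an explicit counterexample. Take $n=m=5$ and $k=2$, so $M=2$ and $E$ is a uniformly random Hamiltonian cycle of $K_5$. Choose distinct $x_i$ and set $f(x_i,x_j)=\mathbb{I}[\{i,j\}\in H]$ for a fixed $5$-cycle $H$. Then $U_{f,C^5_2}=1/2$, $S^2=1/4$, and $U_{f,E}=|E\cap H|/5$. Among the $12$ Hamiltonian cycles, each edge lies in $6$, each adjacent pair of edges of $H$ in $2$, and each disjoint pair in $4$. So $\esp|E\cap H|^2=(5\cdot 6+10\cdot 2+10\cdot 4)/12=15/2$, giving $\var|E\cap H|=5/4$ and $E_{inc}=1/20$.

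The claimed bound is $\frac{N-m}{4m(N-1)}=1/36$, which uniform sampling attains exactly here. So the statement itself fails, not only its proof. What the argument does yield in the exactly regular case is $E_{inc}\le\frac{N-m}{2mn(n-3)}$, that is, $N-1$ replaced by $n(n-3)/2$; this is tight in the example. Falling back on the heuristic, as you propose, would leave the proof resting on a false inequality.
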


\begin{proof}
We begin by considering $E'_{inc} = \esp[ (U_{f, C^n_2} - U_{f, E})^2 ]$ where $E$ is sampled without replacement, 
i.e., $E$ is drawn uniformly at random from all subsets of $C^n_2$ of size $m$.
This provides an upper bound since $E_{inc} \leq E'_{inc}$.

Let $I_v$ be the indicator that the pair $v$ is included in $E$, i.e., $I_v = 1$ if $v \in E$, else $0$.
We have, for $v \in C^n_2$, 
$\mathbb{P}[I_v = 1]
= \frac{m}{N}$
so that $\var(I_v) = \frac{m}{N} - \frac{m^2}{N^2} = \frac{m(N - m)}{N^2}$
and for $v, v' \in C^n_2$ with $v \neq v$,
$\mathbb{P}[I_v = 1, I_{v'} = 1] = \frac{m (m - 1)}{N (N - 1)}.$
Then, we can deduce that for $v, v' \in C^n_2$, $v \neq v$,
\begin{equation*}
  \begin{split}
    \cov(I_v, I_{v'}) &= \esp[I_v I_{v'}] - \esp[I_v] \esp[I_{v'}] \\
    &= \frac{m (m - 1)}{N (N - 1)} - \frac{m^2}{N^2} \\
    &= \frac{m (m - N)}{N^2 (N - 1)}.	
  \end{split}
\end{equation*}

Let us first prove the following statement.
Let $S_1 = \sum_{v \in C^n_2} f(x_v)$ and $S_2 = \sum_{v \in C^n_2} f(x_v)^2$.
Let $\mu = \frac{1}{N} \sum_{v \in C^n_2} f(x_v)$.
Then, we have that:
$S_2 - \frac{1}{N} S_1^2 = \sum_{v \in C^n_2} (f(x_v) - \mu)^2.$

To prove this statement, let us develop the term $\sum_{v \in C^n_2} (f(x_v) - \mu)^2$. 
	\begin{equation}
	\begin{split}
		\sum_{v \in C^n_2} (f(x_v) - \mu)^2 &= \sum_{v \in C^n_2} f(x_v)^2 - 2 f(x_v) \mu + N \mu^2 \\
		&= S_2 - \frac{2}{N}  S_1^2 + N \frac{S_1^2}{N^2} \\
		&= S_2 - \frac{1}{N} S_1^2.
	\end{split}
	\end{equation}

Finally, we can compute the variance as follows:
\begin{equation*}
  \begin{split}
    \lefteqn{\var_{E|x_1\ldots x_n}({U_{f, E}})} & \\
    &= \var_{E|x_1\ldots x_n}\Big( \frac{1}{m} \sum_{v \in E} f(x_v) \Big) \\
    &= \var_{E|x_1\ldots x_n}\Big( \frac{1}{m} \sum_{v \in C^n_2} I_v f(x_v) \Big) \\
    &= \frac{1}{m^2} \Big( \sum_{v \in C^n_2} f(x_v)^2 \var(I_v) + \sum_{\substack{v, v' \in C^n_2  \\ v \neq v'}} \cov(I_v, I_{v'}) f(x_v) f(x_{v'})  \Big) \\
    &= \frac{1}{m^2} \Big( \sum_{v \in C^n_2} \frac{m (N - m)}{N^2} f(x_v)^2 + \sum_{\substack{v, v' \in C^n_2  \\ v \neq v'}} \frac{m (m - N)}{N^2 (N - 1)} f(x_v) f(x_{v'})  \Big) \\
    &= \frac{N - m}{m N^2} \sum_{v \in C^n_2} f(x_v)^2 + \frac{m - N}{m N^2 (N - 1)} \sum_{\substack{ v, v' \in C^n_2 \\ v \neq v' }} f(x_v) f(x_{v'}) \\
    &= \frac{N - m}{m N^2} \Big( S_2 - \frac{S_1^2 - S_2}{N - 1} \Big) \\
    &= \frac{N - m}{m N^2} \cdot \frac{N S_2 - S_1^2}{N - 1} \\
    &= \frac{N - m}{m N^2} \cdot \frac{N}{N - 1} \sum_{v \in C^n_2} (f(x_v) - \mu)^2 \\
    &= \frac{N - m}{(N - 1)m} \cdot \underset{X\sim \text{Uni}(C_2^n)}{\var}(f(X)).
  \end{split}
\end{equation*}
	Under the assumption that $\mathbb{Y} \in [0, 1]$, we get that
	\[
		E_{inc} \leq E'_{inc} \leq \frac{N - m}{4m(N - 1)}.
	\]

\end{proof}

\paragraph{The effect of privacy noise}

	Let $\eta$ be the noise term computed for the Laplace mechanism $\mathcal{M}_{Lap}$, i.e., $\eta \sim Lap(0, b=s/\epsilon)$
	where $s = \delta_G^{max} \Delta f$.

	\begin{claim}
		\label{claim:dp}
		When using $\mathsf{BalancedSamp}$ (Algorithm \ref{algo:sampling}) for sampling the set $E$ and under central $\epsilon$-DP, 
			\begin{equation*}
				\begin{split}
					E_{DP} &\approx  2 \frac{(2 (\Delta f) )^2}{n^2\epsilon^2} \\
						&= \Theta \left( \frac{1}{n^2 \epsilon^2} \right).
				\end{split}
			\end{equation*}
	\end{claim}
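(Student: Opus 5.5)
The plan is to compute $E_{DP}$ exactly as the variance of the rescaled Laplace noise, then plug in the maximal degree that $\mathsf{BalancedSamp}$ guarantees.

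\textbf{Step 1: reduce to a Laplace variance.} By Lemma \ref{lemma:correctness}, $\oursFinal - \oursIncompl = \eta/|E|$. The noise $\eta\sim Lap(0,b)$ with $b = s/\epsilon$ and $s=\delta_G^{max}\Delta f$ (Lemma \ref{lemma:laplace} together with the sensitivity computation of Section \ref{sec:dp_proof}). The noise has mean zero and is independent of $E$ and of the data. Hence
\[
E_{DP} = \esp\left[\eta^2\right]/|E|^2 = 2b^2/|E|^2 = \frac{2\left(\delta_G^{max}\Delta f\right)^2}{\epsilon^2 |E|^2},
\]
using $\var(Lap(0,b)) = 2b^2$.

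\textbf{Step 2: bound the maximal degree.} This step uses the balanced sampling of $\mathsf{BalancedSamp}$. The graph $G$ has $k|E|$ vertex–edge incidences spread over $n$ vertices as evenly as possible, so every degree is $\lfloor k|E|/n\rfloor$ or $\lceil k|E|/n\rceil$. Therefore
\[
\delta_G^{max} = \lceil k|E|/n\rceil \le k|E|/n + 1.
\]
Substituting into Step 1 gives
\[
E_{DP} \le \frac{2(\Delta f)^2}{\epsilon^2}\left(\frac{k}{n} + \frac{1}{|E|}\right)^2.
\]
When $|E| = \omega(n)$, or more precisely when $k|E|/n$ is an integer, this is $2k^2(\Delta f)^2/(n^2\epsilon^2)$ up to lower-order terms. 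This matches the bound $E_{DP}\le 2k^2(\Delta f)^2/n^2\epsilon^2$ stated in Section \ref{section:properties}.

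\textbf{Step 3: specialize to $k=2$.} With $k=2$ we obtain $E_{DP}\approx 2(2\Delta f)^2/(n^2\epsilon^2)$. Here $\approx$ hides the rounding of the ceiling; in the experiments' regime $|E|=2n$ the degree is exactly $4$, and the formula holds up to that constant. The $\Theta(1/(n^2\epsilon^2))$ claim then follows because $\Delta f$ is a constant: $f$ has values in $[0,1]$, so $\Delta f\le 1$, and $\Delta f > 0$ for any nonconstant kernel. The lower bound is immediate from the exact expression, since $\delta_G^{max}\ge k|E|/n$ for any graph by averaging.

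\textbf{Main obstacle.} The delicate point is Step 2. I have to justify that the output of Algorithm \ref{algo:sampling} really attains $\delta_G^{max} = \lceil k|E|/n\rceil$. I would first try to read this off the algorithm's construction, for example a round-robin or regular-hypergraph construction, checking that each vertex is used at most $\lceil k|E|/n\rceil$ times. If the algorithm is randomized, I would instead argue that it enforces the degree cap deterministically, so the bound holds for every realization and not only in expectation.
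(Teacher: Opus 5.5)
Your proposal is correct and follows essentially the same route as the paper: write $E_{DP}=\var(\eta/|E|)=2(\delta_G^{max}\Delta f)^2/(\epsilon^2|E|^2)$, substitute $\delta_G^{max}\approx k|E|/n$ from balanced sampling, and set $k=2$. Your planned justification of the degree cap is exactly what Algorithm \ref{algo:sampling} provides: every vertex starts with budget $\lceil k|E|/n\rceil$, is decremented on each selection, and is never sampled once its budget reaches zero. Together with your averaging lower bound, this gives $\delta_G^{max}=\lceil k|E|/n\rceil$ exactly, which is all the argument needs. The stronger statement that you and the paper both make, that every degree is $\lfloor k|E|/n\rfloor$ or $\lceil k|E|/n\rceil$, is not actually guaranteed by the algorithm (a vertex can end with smaller degree), but it is never used.
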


  \begin{proof}
    Recall that $\eta \sim Lap(0, \delta_G^{max} \Delta f / \epsilon)$.
    Under the $\mathsf{BalancedSamp}$ sampling procedure (Algorithm \ref{algo:sampling}), 
    all vertices have degree $\lceil k|E| / n \rceil$ or 
    $\lfloor k|E|/n \rfloor$. 
    Hence, 
    \begin{equation*}
      \begin{split}
        E_{DP} &= \text{Var} \left( \frac{\eta}{|E|} \right) \\
            &= 2 \left( \frac{\delta_G^{max} \Delta f}{\epsilon |E|}  \right)^2 \\
            &\approx \frac{8}{n^2 \epsilon^2} 
      \end{split}
    \end{equation*}
  \end{proof}



\paragraph{The effect of working with a finite sample}

Let $\sigma_1 = \underset{x_1 \sim \popDist}{\var} ( \underset{X_2 \sim \popDist}{\esp}[f(x_1, X_2)])$ and
$\sigma_2 = \underset{x_1, x_2 \sim \popDist}{\var} ( f(x_1, x_2) )$.
We can use the result from Hoeffding \cite{hoeffding1992class}.

$$E_{sample} = \frac{2}{n (n - 1)} (2 (n - 2) \sigma_1 + \sigma_2).$$ 

Since $\mathbb{Y} = [0, 1]$, we can bound $\sigma_1$ and $\sigma_2$ by $1/4$
and get $E_{sample} \leq \frac{2(2n-3)}{n(n-1)} < 4/n$.

The error between our estimator and the U-statistic sample is expressed as:
\[
	E_{tot} - E_{sample} = E_{inc} + E_{DP}
\]
which gives the following lemma.

\begin{lemma}
	\label{lemma:pairwise}
	Assume $f(x, y) \in [0, 1]$ for all $x, y \in \mathbb{X}$.
	Let $N = \comb{n}{2}$ and let $\Delta f$ be the sensitivity of $f$. 
	Using $\mathsf{BalancedSamp}$ (Algorithm \ref{algo:sampling}) for sampling the set $E$, 
	the protocol $\prod_\text{U-MPC}$ computes the U-statistic $\hat{U}_{f, E}$ under $\epsilon$-DP such that
	\[
		\var(\hat{U}_{f, E}) \leq \frac{N - |E|}{4 |E| (N - 1)} + 2 \frac{(2 \Delta f)^2}{n^2 \epsilon^2}.
	\]
\end{lemma}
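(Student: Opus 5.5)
The bound is read as a bound on the error of $\hat{U}_{f,E}$ around the sample U-statistic $U_{f,C^n_2}$, i.e. on $E_{tot}-E_{sample}=\esp[(U_{f,C^n_2}-\hat{U}_{f,E})^2]$ conditional on the sample $x_1,\ldots,x_n$. The plan is to write
\[
U_{f,C^n_2}-\hat{U}_{f,E} = \bigl(U_{f,C^n_2}-U_{f,E}\bigr) - \frac{\eta}{|E|}
\]
and show the cross term vanishes. This holds because $\eta$ is drawn by $\mathcal{F}_\text{noise}$ independently of the sampling of $E$ and of the data. One subtlety is that the scale $\delta_G^{max}\Delta f/\epsilon$ depends on $E$ through $\delta_G^{max}$; even so, $\eta$ has conditional mean zero given $E$, so conditioning on $E$ still kills the cross term. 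Hence the mean squared error splits exactly as $E_{inc}+E_{DP}$.

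\textbf{Incomplete-sampling term.} For $E_{inc}$ I will invoke Claim \ref{claim:sampling} directly. Under $\mathsf{BalancedSamp}$ with $\mathbb{Y}=[0,1]$ it gives $E_{inc}\le (N-|E|)/(4|E|(N-1))$. The argument there is that the stratified balanced sample has no larger variance than uniform sampling without replacement. The uniform-sampling variance equals $\frac{N-m}{(N-1)m}\var_{X\sim\text{Uni}(C^n_2)}(f(X))$, and a $[0,1]$-valued variable has variance at most $1/4$. Since $U_{f,E}$ is unbiased for $U_{f,C^n_2}$, this MSE is the variance term.

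\textbf{Privacy term.} For $E_{DP}$ I will use Claim \ref{claim:dp}. The noise satisfies $\eta\sim Lap(0,\delta_G^{max}\Delta f/\epsilon)$, so $\var(\eta/|E|)=2(\delta_G^{max}\Delta f/(\epsilon|E|))^2$. With $k=2$, balanced sampling gives $\delta_G^{max}=\lceil 2|E|/n\rceil$, so $\delta_G^{max}/|E|\approx 2/n$ and $E_{DP}\approx 2(2\Delta f)^2/(n^2\epsilon^2)$. Adding the two terms gives the stated inequality.

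\textbf{Main obstacle.} The delicate step is that the ceiling makes $\delta_G^{max}/|E|$ slightly exceed $2/n$ when $n\nmid 2|E|$, so the DP term is only ``$\approx$'' as in Claim \ref{claim:dp}. I would either assume $n$ divides $2|E|$, which holds for the $|E|=2n$ used in the experiments, or absorb the factor $(1+n/(2|E|))^2$ into the constant. I would state this assumption explicitly rather than hide it.
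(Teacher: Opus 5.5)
Your route is the paper's own: the lemma is Claim~\ref{claim:sampling} plus Claim~\ref{claim:dp}. You also handle two points the paper leaves implicit. The cross term vanishes because $\esp[\eta\mid E]=0$, and under $n\mid 2|E|$ one has $\delta_G^{max}=2|E|/n$ exactly, so the DP term equals $2(2\Delta f)^2/(n^2\epsilon^2)$. Both are improvements. The genuine gap is the step you import from Claim~\ref{claim:sampling}: that the balanced (``stratified'') sample has no larger variance than $|E|$ pairs drawn uniformly without replacement. This comparison is false in general. It is the weak point of Claim~\ref{claim:sampling} itself, so citing the claim does not close the gap. It also fails most cleanly under the very assumption $n\mid 2|E|$ that you propose.

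Here is why. Write $m=|E|$, regard $f$ as the vector $(f(x_v))_{v\in C^n_2}$ with mean $\mu$, and let $\Sigma$ be the covariance matrix of the inclusion indicators $(I_v)_v$, so that $\var(U_{f,E})=f^\top\Sigma f/m^2$.

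\begin{itemize}
\item When $n\mid 2m$, $\mathsf{BalancedSamp}$ outputs an exactly $M$-regular graph ($M=2m/n$), and its law is invariant under relabelling vertices.
\item Hence $\Sigma$ annihilates the constants and every vector $w_g(v)=g_{v_1}+g_{v_2}$ with $\sum_i g_i=0$, because $\sum_v I_v w_g(v)=M\sum_i g_i$ is deterministic. This is an $n$-dimensional subspace.
\item Yet $\mathrm{tr}\,\Sigma=\sum_v\var(I_v)=m(N-m)/N$, exactly as for uniform sampling.
\item The permutation module on pairs is multiplicity-free, so $\Sigma$ acts as a scalar on the $(N-n)$-dimensional orthogonal complement. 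That scalar must be $m(N-m)/(N(N-n))$, strictly larger than the uniform-design eigenvalue $m(N-m)/(N(N-1))$.
\end{itemize}

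So balancing removes the variance of the H\'ajek-projection part of $f$ but inflates it in the degenerate directions.

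For a concrete counterexample, take $n=5$, $x_i=i$, $f(x,y)=\indic[x-y\equiv\pm1 \pmod 5]$ and $|E|=5$. Then $\mathsf{BalancedSamp}$ returns a uniformly random Hamiltonian cycle. The number of edges it shares with the cycle $(0,1,2,3,4)$ takes the values $5,3,2,0$ with multiplicities $1,5,5,1$. Hence $\var(U_{f,E})=1/20>1/36=\frac{N-|E|}{4|E|(N-1)}$. Here $\delta_G^{max}/|E|=2/n$ exactly, so the DP terms coincide and the stated inequality fails.

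What your route does prove in the regular case is $E_{inc}=\frac{N-m}{mN(N-n)}\|P(f-\mu)\|^2\le\frac{N-m}{4m(N-n)}$, where $P$ is the projection onto that complement. That is, the lemma holds with $N-1$ replaced by $N-n$. This loses only a factor $1+O(1/n)$, so the asymptotic rates survive. Without $n\mid 2|E|$, you would additionally have to bound the dispersion of the degree sequence.
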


\begin{proof}
	Follows from Claims \ref{claim:sampling} and \ref{claim:dp}.
\end{proof}

The proof extends to general $k$ 
by replacing pairs $(i, j)$ with $k$-tuples and applying the same arguments.

\subsubsection{Ghazi}
\newcommand{\ghaziDiscr}{\bellDiscr}
\newcommand{\ghaziPop}{\bellPop}
\newcommand{\ghaziSample}{\bellSample}
\newcommand{\ghaziJL}{\tilde{U}_{f, C^n_2, \pi}}
\newcommand{\ghaziApprox}{\tilde{U}_f}

Similar to Bell, Ghazi et al.'s paper computes an unbiased U-statistic by discretizating the data, 
in particular they consider 
\[
	\ghaziDiscr = \comb{n}{2}^{-1} \sum_{(x_1,x_2)\in C_2^n} f_A(\pi(x_1),\pi(x_2))
\] 
where $\pi$ and $A$ are defined in Bell's section. 
The authors approximate this value by using the JL theorem. 
In particular, let $A = L^T R$ where $L, R  \in \mathbb{R}^d$ with $d < t$ such that
\[
	\ghaziJL = \left<\frac{1}{n} \sum_{i \in [n]} L e_{\pi(x_i)}, \frac{1}{n} \sum_{i \in [n]} R e_{\pi(x_i)}  \right>
\]
where $\left< \cdot, \cdot \right>$ denotes the inner product and $e_i \in \mathbb{R}^t$ denotes the one-hot-vector of size $t$ 
with a $1$ at position $i$.
Finally, the authors propose this unbiased estimator:
\[
	\ghaziApprox = \left<\frac{1}{n} \sum_{i \in [n]} \mathcal{R}(L e_{\pi(x_i) }), \frac{1}{n} \sum_{i \in [n]} \mathcal{R}(R e_{\pi(x_i)})  \right>
\]
where $\mathcal{R}: \mathbb{R}^d \to \mathbb{R}^d$ is an $\epsilon$-LDP randomizer from \cite{blasiok2019towards}.
The MSE can be decomposed into 
\[
	E_{tot} = E_{sample} + E_{discr} + E_{JL} + E_{DP}
\]
where 
\begin{eqnarray*}
  E_{sample} &=& \esp_{x_1 \ldots x_n}\left[(\ghaziPop - \ghaziSample)^2\right]\\
  E_{discr} &=&  (\ghaziSample- \ghaziDiscr)^2  \\
  E_{JL} &=& \esp_{A = L^T R} \left[ (\ghaziDiscr - \ghaziJL)^2 \right] \\
  E_{DP} &=&  \esp_{\bellRnd}\left[(\ghaziJL - \ghaziApprox)^2\right]
\end{eqnarray*}

Let us separately bound these error terms.

\paragraph{The effect of working with a finite sample}

This is similar to Bell and our protocol, i.e., 
\[
	E_{sample} < 4 / n.
\]

\paragraph{The effect of discretization}

Let $f$ be a $L_f$-Lipschitz and $\forall x, y : f(x, y) \in [0,1]$.
This is similar to Bell, i.e.,
\begin{equation}
  E_{discr} \le \frac{L_f^2}{t^2}
\end{equation}

\paragraph{The effect of JL theorem}

Let $||A||_{1 \to 2}$ be the maximum $(\ell_2)$-norm of the columns of $A$.
Let $\gamma_2(A) = \min_{A = L^T R} ||L||_{1 \to 2} ||R||_{1 \to 2}$.
In Ghazi et al. \cite{ghazi2024computing}, the authors prove that:
\[
	\esp \left[ (\ghaziDiscr - \ghaziJL)^2 \right] \leq \left( \frac{\gamma_2(A)}{\epsilon \sqrt{n}} \right)^2.
\]
Furthermore, Ghazi et al. \citep[Lemma~17]{ghazi2024computing} show that
when $f$ is $L_f$-Lipschitz, $\gamma_2(A) \leq O(t \cdot L_f)$.
Hence, we have:
\[
	E_{JL} = O \left( \frac{t^2 L_f^2}{\epsilon^2 n} \right).
\]

\paragraph{The effect of local DP}

In Ghazi et al. \cite{ghazi2024computing}, the authors prove that 
\begin{equation}
	\esp[( \ghaziJL - \ghaziApprox)^2] \leq \frac{ \gamma_2(A)^2}{\epsilon^2 n} + \frac{\gamma_2(A)^2 d}{\epsilon^4 n^2}.
\end{equation}
Since  $f$ is $L_f$-Lipschitz, $\gamma_2(A) \leq O(t \cdot L_f)$
and $d = O((\log t) \epsilon^2 n)$, we have that:
\[
E_{DP} \leq O \left( \frac{t^2 L_f^2 \log t}{\epsilon^2 n} \right).
\]

In total, the MSE over the sample U-statistic for Ghazi et al. \cite{ghazi2024computing} is:
\begin{equation*}
	\begin{split}
		E_{tot} - E_{sample} &= E_{discr} + E_{JL} + E_{DP} \\
			&= O \left( \frac{L_f^2}{t^2} + \frac{t^2 L_f^2 \log t}{\epsilon^2 n} \right).
	\end{split}
\end{equation*}

	\subsubsection{Summary}
	When assuming $\mathbb{Y} = [0, 1]$ and $f$ is $L_f$-Lipschitz, we obtain the following table. 

  \begin{table}[ht]
    \centering
	\begin{tabular}{lll}
  	\Xhline{2\arrayrulewidth}
	\noalign{\vskip 2mm}
  			Protocol		& Population MSE         & Sample MSE \\
	\Xhline{2\arrayrulewidth}
	\noalign{\vskip 2mm}
  Bell       		& $O \left(\frac{t^2}{n \epsilon^2} + \frac{L_f^2}{t^2} \right)$      & $O \left( \frac{t^2}{n \epsilon^2} + \frac{L_f^2}{t^2} \right)$ \\
  Ghazi 			& $O \left(\frac{t^2 L_f^2 \log t}{n \epsilon^2} + \frac{L_f^2}{t^2} \right)$          & $O \left( \frac{t^2 L_f^2 \log t}{n \epsilon^2} + \frac{L_f^2}{t^2} \right)$ \\
  We 				& $O \left( \frac{1}{n} \right)$  & $O \left( \frac{1}{|E|} + \frac{1}{n^2 \epsilon^2} \right)$ \\ 
  \Xhline{2\arrayrulewidth}
	\noalign{\vskip 2mm}
\end{tabular}
\caption{Summary of the MSE for the compared protocols}
\end{table}

\subsection{Details on Communication and Computation costs}
\label{app:comp_comm_cost}

\subsubsection{Communication cost}
Let $\ell$ be the number of bits required to represent one element.

\paragraph{Bell}
In terms of communication cost, the protocol of Bell et al. \cite{bell2020private} requires to send $O(n \ell t )$ bits in total where $t$ is the number of bins chosen.

\paragraph{Ghazi}
The protocol from Ghazi et al. \cite{ghazi2024computing} allows to reduce the communication cost by requiring that the matrix $A \in \mathbb{R}^{t \times t}$ 
describing the kernel function is decomposable by two other matrices $L, R \in \mathbb{R}^{d \times t}$ with $d \leq t$. 
The value $d$ equals $O (\beta^{-2} \log t)$ where $\beta$ is the approximation parameter in the JL theorem. 
Ghazi et al.'s protocol requires $\beta = 1 / \epsilon \sqrt{n}$, which results in $d = O(\epsilon^2 n \log t)$.
Hence, $\Ghazi$ requires to send $O(\epsilon^2 n^2 \log t \ell)$ bits in total while
the shuffled variant $\GhaziShuffle$ requires an additional factor $r$ representing the number servers used for shuffling.
In \cite{balle2020private}, the authors show that it suffices to take $r = O \left( \log(n) + \kappa \right)$ for $\epsilon$-DP,
where $\kappa$ is the statistical security parameter,
yielding $ O \left( (\log(n) + \kappa) n^2 \epsilon^2 (\log t)\ell \right)$ bits in communication cost. 

\paragraph{Our protocol}
For our protocol $\Umpc$, 
the communication cost in the online phase is $ O(|E| (\ell + \CCf) + \CCNoise)$ bits.
Under the threat model $\mathcal{M}_\text{HF}$, the Protocol $\prod_\text{DLap}$ (Protocol \ref{protocol:3}) can be modified
to delegate the sharing of noise $\eta$ to a subgroup $Z$.
Therefore, the communication cost $\CCNoise$ is at most $O(n \ell)$.
Consequently, $\Umpc$ requires a total $O( |E| (\ell + \CCf) + n \ell)$ bits.

\subsubsection{Party computational cost}
We measure the computational cost as the number of operations performed by each party. 

\paragraph{Bell}
In $\Bell$, each party $P_i$ only perturbs its one-hot-vector encoded private data $x_i \in \mathbb{R}^t$,
which can be done in $ O (t)$ operations. 

\paragraph{Ghazi}
The original procotol $\Ghazi$
requires each party $P_i$ to perform two matrix-vector products between its private vector data $x_i \in \mathbb{R}^t$ and 
the matrices $L, R \in \mathbb{R}^{d \times t}$.
\cite{ailon2006approximate} gives a construction for Fast JL Transform which allows to perform the product 
in $O \left( t \log t + \frac{t \log t}{\beta^2} \right)$ operations by enforcing the sparsity of matrices $L, R$ 
through preconditioning the projection with a randomized Fourier transform.
Hence, the $\Ghazi$ algorithm requires $ O \left( \epsilon^2 n t \log t \right)$ operations.
As for $\GhaziShuffle$, the shuffled model necessitates additional operations to encode each coordinate of the resulting vectors 
$L x_i, R x_i \in \mathbb{R}^d$, 
yielding $ O \left( \epsilon^2 n t \log t + t (\kappa + \log n) \right)$ operations 
where $\kappa$ is the statistical security parameter.

\paragraph{Our protocol}
For our protocol $\Umpc$, let $\CompCf$ denote 
the computational cost for a single party to evaluate the function $f$ once.
Assume we use Protocol $\prod_\text{DLap}$ (Protocol \ref{protocol:3}).
Then, creating distributed Laplace noise requires $O(n)$ overhead per party.
With balanced sampling, each party $P_i$ participates in $ \lceil 2|E| / n \rceil$ edges where $E$ is the set of edges.
Therefore, in $\Umpc$, the computation per party is $ O \left( |E| \cdot \CompCf / n + n \right)$.

\subsubsection{Server computational cost}
The server computational cost is measured as the number of operations the server has to perform in order to compute the U-statistic.

\paragraph{Bell}
In $\Bell$, the server is required to perform $\binom{n}{2}$ matrix-vector products involving $A \in \mathbb{R}^{t \times t}$, 
which results in $O(n^2 t^2)$ operations. 

\paragraph{Ghazi}
For the original Ghazi et al.'s protocol ($\Ghazi$), a summation of the vectors sent by the parties is required, 
resulting in $n \cdot d$ operations. 
Furthermore, a dot product between vectors of size $d = O(\epsilon^2 n \log t)$ is required. 
Hence, it requires $ O(\epsilon^2 n^2 \log t + \epsilon^2 n \log t) = O\left( (\epsilon n)^2 \log t \right)$ operations. 
For $\GhaziShuffle$, it requires $O \left( (\kappa + \log n) \epsilon^2 n^2 \log t \right)$ operations to the server.

\paragraph{Our protocol}
For our protocol $\Umpc$, it only requires $n$ additions. 

\section{Additional Experiments}
In this appendix, we first review the different choices made to implement $\Umpc$. 
Then, we give additional experimental results. 

\subsection{Choice of protocols}
\label{subsection:choice_protocols}
\subsubsection{Offline phase}
	For the experiments, we consider the generation of multiplication triples during the offline phase.
		There are two types of multiplication triples involved in the protocol:
		\begin{itemize}
			\item $2$-party triples for the evaluation of $f$. 
			These triples can be generated using the approach found in \cite{boyle2020efficient}. 
			Let $\lambda$ be the security parameter, which we set to $\lambda = 128$. 
			This results in $O(\mathsf{poly}(\lambda) \cdot \log (N_f \cdot \ell) )$ bits for $N_f$ triples.
			\item $n$-party triples for the generation of shared noise.
			To generate these triples, we use the correlated pseudorandomness from \cite{bombar2023correlated}
			 which requires $O(\lambda^3 \cdot n^2 \cdot \log(2 N_\text{Noise}))$ bits to generate
			 $N_\text{Noise}$ triples.
		\end{itemize} 

	\subsubsection{Online phase}
		In our experiments, the evaluation of $f$ involves secure comparisons, i.e., compute $\secretS{x < y}$ given $\secretS{x}$ and $\secretS{y}$.
		For this, we adopt the approach from \cite{boyle2019secure} which employs FSS to enable $2$-party secure comparison,
		resulting in a communication cost of $O(\lambda \ell)$ bits per comparison. 
		For the generation of distributed noise via the Laplacian mechanism, 
		we rely on Protocol \ref{protocol:3}, which requires $O(n^2 \ell)$ bits of communication in total. 

	\subsection{More experimental results}
	\label{subsection:more_exp}
	\subsubsection{Dataset}

	The dataset is the Heart Disease dataset \cite{heart_disease_45}. 
	This dataset displays a set of features for each patient, e.g., cholesterol, maximum heart rate, etc., 
	along with a target column that describes the diagnosis of heart disease for the patient.
	The dataset contains 920 entries.
	After removing the entries with missing values for the columns of interest, the dataset contains 834 entries.
 	We normalize all numerical dataset values. 

	\subsubsection{Protocols}
	We consider the $4$ protocols $\Ghazi$, $\GhaziShuffle$, $\Bell$ and $\Umpc$
	mentioned in the beginning of Sec \ref{section:comparison}.

	\subsubsection{Kendall's $\tau$ coefficient}
	Let $x_i = (y_i, z_i)$ where $y_i, z_i \in [0, 1]$ 
	represent respectively the resting blood pressure and the serum cholesteral for party $i \in [n]$.
	The goal is to compare the several metrics mentioned in Sec \ref{section:comparison} 
	when computing Kendall's $\tau$ coefficient for data points $\{x_i\}_{i \in [n]}$ for $\epsilon = 1$. 

	Figure \ref{fig:kendall_a} shows the total communication cost, the MSE, the computation cost per party and the computation cost of the server
	for the different protocols when computing the Kendall's $\tau$ coefficient. 
	For each point on the graph, data points are sampled uniformly without replacement from the dataset. 
	While $\Umpc$ has a higher communication cost than $\Bell$, it achieves lower MSE and server computation cost.

	\begin{figure}[ht]
	\vskip 0.2in
	\begin{center}
	\centerline{\includegraphics[width=\columnwidth]{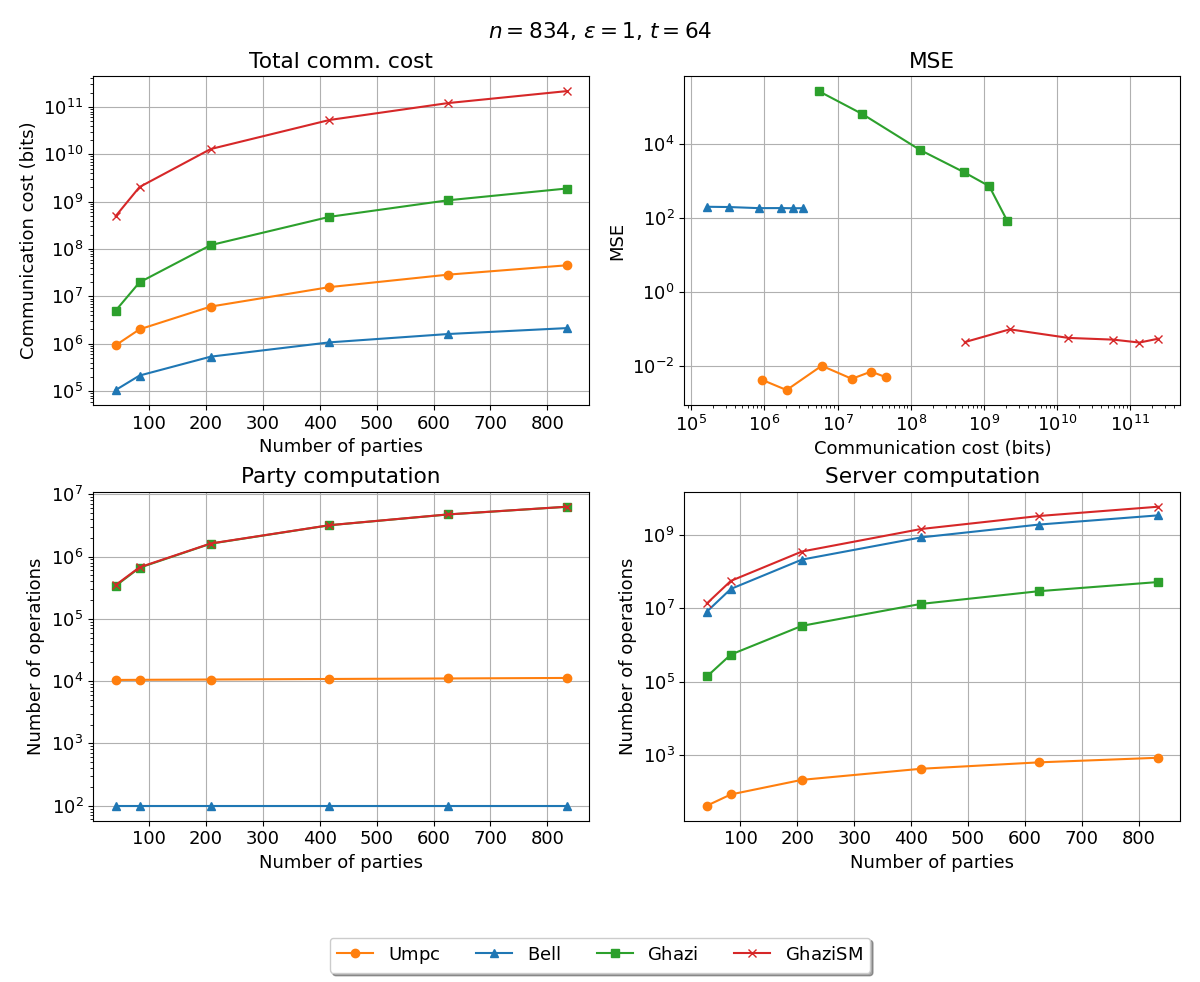}}
	\caption{Online total communication cost, MSE, per-party and server computation costs comparison for computing the Kendall's $\tau$ coefficient with protocols $\Bell$, 
	$\Ghazi$, $\GhaziShuffle$ and $\Umpc$ for $\epsilon = 1$ and $t=64$. 
	The dataset is taken from \cite{heart_disease_45}.
	For $\Umpc$, we sample $2n$ edgesfor $|E|$.}
	\label{fig:kendall_a}
	\end{center}
	\vskip -0.2in
	\end{figure}

	\subsubsection{Duplicate Pair Ratio}

	We focus on computing $U_f$ where $f(x, y) = \mathbb{I}[x = y]$ such that $x, y$ are categorical data points. 
	We use the same dataset as the previous example, i.e., the Heart Disease dataset \cite{heart_disease_45} 
	to obtain a diagnosis of heart disease for each patient.
	In this context, the input space is $\mathbb{X} = \{0, \ldots, 4 \}$.

	Figure \ref{fig:duplicate_a} presents the total online communication cost, the MSE, 
	the per-party and server computation costs
	for the duplicate pair ratio. 
	Although $\Bell$ is optimal in both total communication cost and per-party computation cost, 
	it transfers the main computational load to the server.
	One can observe a clear distinction for the MSE between achieving central-DP --- $\Umpc$, $\GhaziShuffle$ ---
	and LDP, i.e., $\Bell$ and $\Ghazi$.

	\begin{figure}[h]
	\vskip 0.2in
	\begin{center}
	\centerline{\includegraphics[width=\columnwidth]{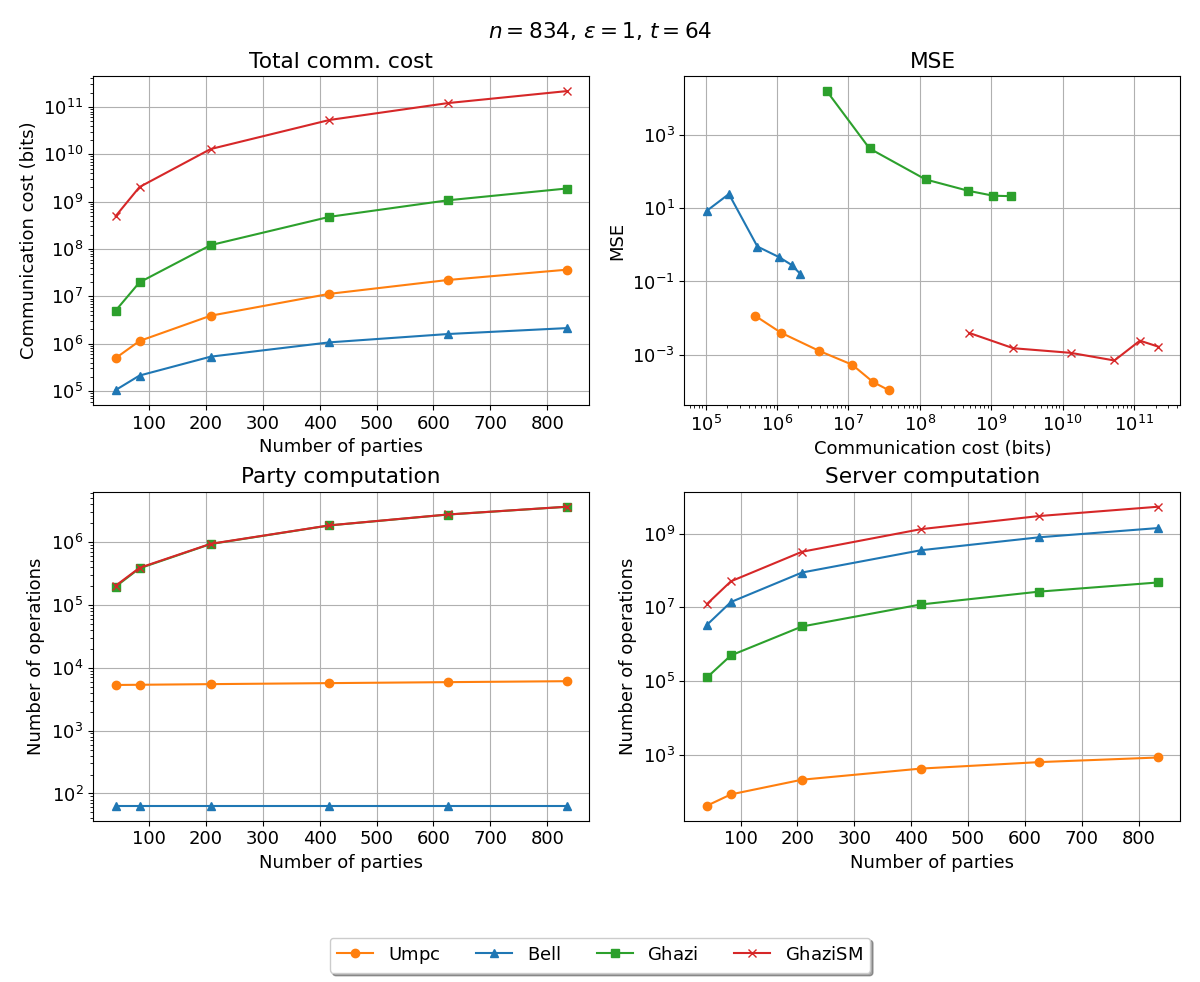}}
	\caption{Online total communication cost, MSE, per-party and server computation costs comparison for computing $U_f$ with $f(x, y) = \mathbb{I}[x = y]$ 
	with $\epsilon = 1$ and $t= 64$. 
	Data represents the diagnosis of heart diseases of patients from \cite{heart_disease_45}.
	For $\Umpc$, we sample $2n$ edges for $|E|$.}
	\label{fig:duplicate_a}
	\end{center}
	\vskip -0.2in
	\end{figure}

\section{Balanced sampling}
\label{subsection:algo_sampling}

A number of properties of our protocol depend on the maximal degree $\delta_G^{max}$ of the hypergraph $G=(V,E)$.  The best is to have a small $\delta_G^{max}$ and hence to sample a hypergraph $G$ which is a regular as possible, i.e., every instance $x_i$ is involved in about the same number of edges $e\in E$ (one can't avoid a difference of $1$ in the degrees of the vertices as $k|E|$ may no be a multiple of $n$).  The algorithm below is a simple way to generate such hypergraph $G$.

\renewcommand{\algorithmicrequire}{\textbf{Input:}}

\begin{algorithm}[h]
  \label{algorithm:sampling}
  \caption{$\mathsf{BalancedSamp}(m, k, n)$}\label{algo:sampling}
  \begin{algorithmic}[1]
    \REQUIRE Size of the set of edges $m$, kernel degree $k$, number of parties $n$
    \STATE Set $E = \{\}$.
    \STATE Set $M = \Bigl\lceil \frac{k m}{n} \Bigr\rceil$.
    \STATE Set $p_i = M$ for $i \in [n]$.
    \WHILE{$|E| < m$}
    \IF{$p$ contains less than $k$ non-zero values}
    \STATE \textbf{restart}
    \ENDIF
    \STATE Sample without replacement $v_1, \ldots, v_k$ from $[n]$ where the probability to select $v_i$ is proportional to $p$.
    \STATE Add the edge $(v_1, \ldots, v_k)$ to $E$.
    \FOR{$i \in [k]$}
    \IF{$p_{v_i} > 0$}
    \STATE $p_{v_i} \leftarrow p_{v_i} - 1$
    \ENDIF
    \ENDFOR
    
    \ENDWHILE
    
    \STATE Return $E$.

  \end{algorithmic}
\end{algorithm}	








\end{document}